\newtheorem{theorem}{Theorem}
\def\ppdl{PPDL\xspace}
\def\ppdlbase{\ppdl-NV\xspace}
\def\ppdlerror{\ppdl-LWE\xspace}
\def\ppdlpairs{\ppdl-PW\xspace}
  \providecommand\BibTeX{{%
    \normalfont B\kern-0.5em{\scshape i\kern-0.25em b}\kern-0.8em\TeX}}}
\begin{document}

%%
%% The "title" command has an optional parameter,
%% allowing the author to define a "short title" to be used in page headers.
\title{Privacy-Preserving, Dropout-Resilient Aggregation in Decentralized Learning}

% Efficient Dropout-resilient Aggregation for Privacy-preserving Machine Learning

%% The "author" command and its associated commands are used to define
%% the authors and their affiliations.
%% Of note is the shared affiliation of the first two authors, and the
%% "authornote" and "authornotemark" commands
%% used to denote shared contribution to the research.
\author{Ali Reza Ghavamipour, Benjamin Zi Hao Zhao, Fatih Turkmen}

%%
%% By default, the full list of authors will be used in the page
%% headers. Often, this list is too long, and will overlap
%% other information printed in the page headers. This command allows
%% the author to define a more concise list
%% of authors' names for this purpose.
% \renewcommand{\shortauthors}{Trovato and Tobin, et al.}

%%
%% The abstract is a short summary of the work to be presented in the
%% article.
\begin{abstract}
Decentralized learning (DL) offers a novel paradigm in machine learning by distributing training across clients without central aggregation, enhancing scalability and efficiency. However, DL's peer-to-peer model raises challenges in protecting against inference attacks and privacy leaks. By forgoing central bottlenecks, DL demands privacy-preserving aggregation methods to protect data from 'honest but curious' clients and adversaries, maintaining network-wide privacy. Privacy-preserving DL faces the additional hurdle of client dropout, clients not submitting updates due to connectivity problems or unavailability, further complicating aggregation.

This work proposes three secret sharing-based dropout resilience approaches for privacy-preserving DL. Our study evaluates the efficiency, performance, and accuracy of these protocols through experiments on datasets such as MNIST, Fashion-MNIST, SVHN, and CIFAR-10. We compare our protocols with traditional secret-sharing solutions across scenarios, including those with up to 1000 clients. Evaluations show that our protocols significantly outperform conventional methods, especially in scenarios with up to 30\% of clients dropout and model sizes of up to $10^6$ parameters. Our approaches demonstrate markedly high efficiency with larger models, higher dropout rates, and extensive client networks, highlighting their effectiveness in enhancing decentralized learning systems' privacy and dropout robustness.

\end{abstract}

\maketitle

\section{Introduction}

Machine learning (ML) has become a pivotal component in numerous applications, including pattern recognition, medical diagnosis, and credit risk assessment. The effectiveness of a machine learning model is heavily dependent on the availability of large volumes of data. The conventional method of training a machine learning model involves collecting a dataset on a central server and conducting the training process on this server. However, centralization poses difficulties when the data is distributed over numerous devices or among a workforce that are located in different geographical regions. This issue is compounded by growing concerns over data privacy. To address these challenges, researchers have suggested alternative approaches to centralized learning under the umbrella term Collaborative Machine Learning (CML). CML enables the training of machine learning models directly on local devices to enhance privacy (by maintaining data locality) and offload compute demand (away from the central server)~\cite{li2020federated}.
% Collaborative Machine Learning (CML) represents a major leap forward in enhancing privacy and security in the development and refinement of ML models. 
% This approach is especially crucial in situations where centralizing data is not feasible or raises significant privacy and regulatory issues. It allows various participants to contribute to the ML models' creation and improvement while safeguarding the privacy of individual data sources.

% One of the prime examples of CML is Federated learning (FL)~\cite{mcmahan2017communication} that attracted significant attention from both academia and industry. FL distributes the machine learning training process to multiple nodes, from individual mobile devices to organizational data centers, with a designated central server coordinating the training process between nodes. Each participating node's primary task is to improve the model with its own data, before forwarding the model updates to the central server, which then combines these updates and sends them back to the clients.  While FL enables the training of ML models in settings where it was not possible previously, it places an inherent trust on the aggregation server which presents a single point of failure and leads to new challenges related to model integrity and data privacy~\cite{bonawitz2019towards}.

Federated Learning (FL)~\cite{mcmahan2017communication}, a key example of CML, has garnered notable interest across academia and industry. It decentralizes the training process across multiple clients, ranging from mobile devices to data centers, under the coordination of a central server. Clients locally update the model using their data, then share these updates with the server, which aggregates them for distribution. While FL facilitates training in novel contexts, it relies heavily on a central server, introducing risks like a single point of failure and challenges in maintaining model integrity and data privacy~\cite{bonawitz2019towards}.

Decentralized Learning (DL)~\cite{lian2017can,beltran2023decentralized,koloskova2019decentralized,hegedHus2021decentralized} emerged as a popular, scalable, and communication-efficient alternative to traditional CML algorithms. Unlike FL methods, DL operates without a central aggregator, thereby mitigating integrity and privacy risks associated with the central server in FL, and more importantly avoiding a (potentially untrusted) single point of failure. DL achieves global model training through on-device aggregation of model parameters, facilitated by peer-to-peer exchanges among clients~\cite{hegedHus2019gossip}. This process, akin to gossip-based algorithms in random networks eliminates the need for centralized aggregation. In DL, each client plays an active role in updating its model using both local data and updates from its peers, leading to improvements in model accuracy and faster convergence.

As we have established, the practice of sharing model updates instead of raw datasets mitigates privacy concerns related to the centralization of data for learning. Unfortunately, CML retain risks of private information leakage~\cite{boenisch2023curious,wen2022fishing}, as a passive adversary can infer sample membership~\cite{shokri2017membership,zhao2021feasibility} and reconstruct training data~\cite{luo2021feature} from the shared updates of honest users. Moreover, DL systems are especially susceptible to said privacy risk, as the architecture of DL requires all clients to share their updates with every other client for aggregation, thereby providing any client the opportunity to launch the aforementioned privacy attacks~\cite{pasquini2023security}.

% This situation prevents these clients from actively participating in future rounds of aggregation without a renewed initialization phase.  \ben{what does this previous sentence mean}
% Additionally, 

% To mitigate the communication of plain updates between clients, secret sharing, and secure aggregation algorithms can be used to safeguard the updates. Unfortunately, this introduces additional considerations whereby a single client dropping from the cryptographic process due to failing to submit their model updates—a common problem often caused by poor network connections, energy constraints, or temporary unavailability~\cite{li2020federated}. This dropped client would disrupt the cryptographic process and force a restart of the round, with enough adversaries, they effectively achieve a Denial Of Service effect on model training.

To mitigate these privacy concerns, secure aggregation is proposed to prevent the aggregator(s) from inferring private data from local model updates~\cite{ghavamipour2023federated,ghavamipour2022privacy,fereidooni2021safelearn,so2022lightsecagg}. However, implementing secure aggregation faces implementation challenges because of client dropout (as depicted in Figure~\ref{DVvsFL}). The issue of Dropout arises when a client fails to submit its model updates or cannot participate in the secure aggregation procedure due to poor network connections, energy constraints, or temporary unavailability. Such dropout can disrupt the secure aggregation process, necessitating a restart of the round, potentially leading to a Denial of Service (DoS) effect on model training.

% the implementation of secure aggregation brings about additional challenges. A prevalent issue is when a single client fails to submit its model updates due to poor network connections, energy constraints, or temporary unavailability, leading to their dropout from the cryptographic process~\cite{li2020federated}. Such a dropout can disrupt the cryptographic process, requiring a restart of the aggregation round. If enough adversaries exploit this vulnerability, it could effectively lead to a Denial of Service (DoS) effect on model training.

% Moreover, in collaborative learning systems, it is crucial to address the issue of clients dropping out during the aggregation process or failing to submit their model updates. This common problem, often caused by poor network connections, energy constraints, or temporary unavailability~\cite{li2020federated}, poses a significant challenge. The challenge lies in either recovering or excluding the dropped client's model update without compromising the integrity of other clients' updates or derailing the current aggregation round. Therefore, finding a way to ensure efficient processing in decentralized learning environments, which supports the easy recovery or exclusion of dropout clients, remains an area that needs comprehensive exploration.

% \ben{to add a sentences outlining why DL is insufficient and we need secure dl}

In this work, we introduce three novel secret sharing-based protocols designed for privacy-preserving aggregation within decentralized learning environments. These protocols enable dropout-resilient aggregation, merging clients' local models into a global model while ensuring the privacy of individual data contributions. Our approaches not only effectively mitigate privacy risks posed by 'honest but curious' clients but also uphold the integrity of the learning process despite varying levels of client participation, thereby enhancing dropout resilience. The underlying techniques are inspired by their proven effectiveness in federated learning. However, our contribution lies in adapting and refining these techniques to address the unique challenges of secure decentralized learning. This adaptation extends their applicability and enhances the privacy and security framework of the learning process, especially by incorporating mechanisms for dropout resilience. To validate the effectiveness of our proposed protocols, we conducted a comprehensive evaluation across multiple datasets, demonstrating their robustness and scalability in diverse learning scenarios.

% The underlying techniques draw inspiration from their proven effectiveness in federated learning; however, our contribution lies in their adaptation and refinement to meet the distinct challenges of secure decentralized learning. This adaptation extends their applicability and strengthens the privacy and security framework of the learning process, particularly by incorporating mechanisms for dropout resilience.

\section{Related work}

Research on privacy-preserving and dropout-resilient FL is extensive, yet studies tackling similar issues in DL are notably fewer~\cite{pasquini2023security}. In this respect, \ppdl stands out as the first to directly address dropout challenges in the realm of privacy-enhanced DL.
%, marking a significant advancement in the field.

In the context of FL, various strategies have been developed for performing privacy-preserving aggregation with a limited outlook on resiliency to client dropout. Most existing works focus on combining federated training with Differential Privacy (DP)~\cite{dwork2006differential}, Homomorphic Encryption (HE)~\cite{gentry2009fully}, and secure Multi-Party Computation (MPC)~\cite{canetti1996adaptively}. Since dropout has an impact on the privacy protection mechanism employed in the aggregation method,
highlighting the differences between these methodologies   
in terms of privacy guarantees, efficiency, and 
the implementation challenges is important.

Approaches employing homomorphic encryption result in significant performance overheads~\cite{truex2019hybrid}. This technique requires all clients to either share a common secret key or hold shares of a secret key~\cite{aono2017privacy,zhang2020batchcrypt}. These shares must be generated through complex multi-party protocols or distributed by a trusted third party~\cite{truex2019hybrid}, which results in prohibitive overheads for large-scale CML systems~\cite{zhang2022homomorphic}. Moreover, supporting dropout resilience in conjunction with homomorphic encryption poses an additional challenge, as it requires sophisticated mechanisms to handle the potential discontinuity in participation without compromising the security or the integrity of the aggregated model updates~\cite{tian2022distributed}.

%Such HE-based methods introduce prohibitive overheads for large-scale CML systems due to the expensive protocols needed by the underlying threshold cryptosystems~\cite{zhang2022homomorphic}. 
As a more scalable and dropout-resistant alternative, MPC has been proposed~\cite{bonawitz2017practical}. In this model, users' data is masked, with the seeds for generating these masks securely distributed among the users through a threshold secret-sharing scheme to manage client dropouts. An improvement on this approach, \cite{bell2020secure} replaces the complete communication graph with a \(k\)-regular graph (i.e., every vertex having a degree of $k$), reducing communication overhead. A DP-based FL method has been introduced where noise is added to each client's locally trained model before aggregation~\cite{geyer2017differentially}, effectively masking the data distribution of the clients. However, this method requires careful balancing between the level of privacy maintained and the impact on the model's performance due to perturbation.

Guo et al. \cite{guo2021topology} pioneered the application of differential privacy in decentralized learning through the introduction of LEASGD, which enhances a stochastic gradient descent algorithm with privacy protection and noise reduction strategies. While this method effectively maintains privacy at the model level, there is a competing trade-off in the utility of the model. Further advancing the field, Bellet et al. \cite{bellet2018personalized} devised a fully decentralized algorithm for training personalized ML models, embedding differential privacy to safeguard individual data privacy. Further, Xu et al.~\cite{xu2022privacy} presents \(D^2\)-MHE, a secure and efficient framework for decentralized deep learning that employs advanced homomorphic encryption techniques for the private updating of gradients. This method notably decreases communication complexity and provides robust data privacy safeguards. It emphasizes training models in an encrypted format, leveraging sophisticated cryptographic schemes such as Brakerski-Fan-Vercauteren (BFV) and Cheon-Kim-Kim-Song (CKKS). Despite its advantages in guaranteed protections of client updates, \cite{xu2022privacy} remains burdened by inefficient computation and does not address functional implications of client dropout in deployment as we do with \ppdl.

\section{Preliminaries}

In this section, we provide the necessary background on decentralized learning, and introduce the building blocks used to provide privacy of updates and resilience against dropouts in \ppdl.

\begin{figure}[t]
\centering
  \includegraphics[width=\linewidth]{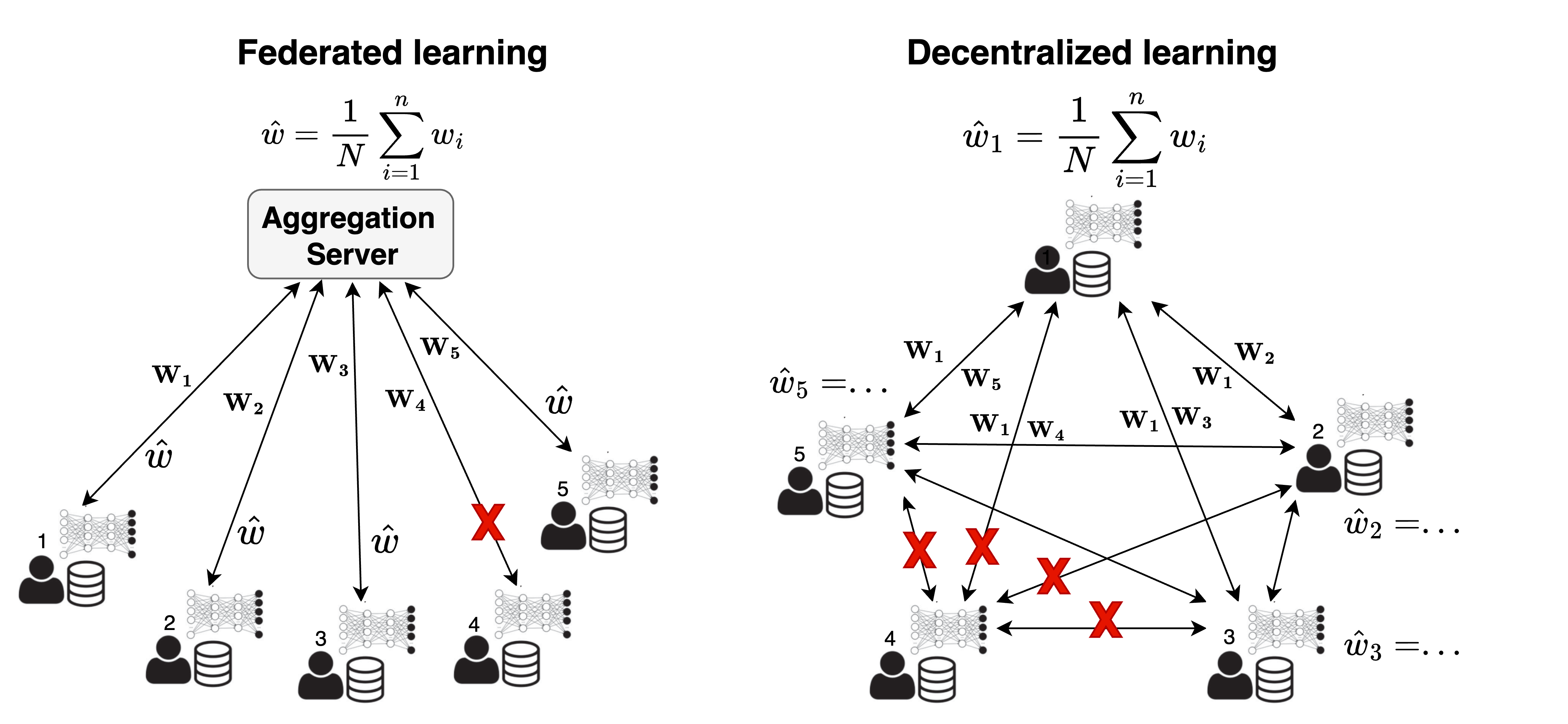}
\caption{Collaborative learning system in a federated (left) and decentralized fashion when Client 4 has dropped out.}  \label{DVvsFL}
\end{figure}

\subsection{Learning in a Decentralized Setting}\label{dc}

In a decentralized learning framework, each participant, denoted as \( c \) within the user subgroup \( C \), establishes communication links directly with a designated subset referred to as their neighbors \( \mathbf{N}(c) \). These links can be either static, established at the outset, or dynamic, subject to change over time. The ensemble of users forms an undirected graph represented by 
% \( G=\{C, \cup_{c \in C} \mathbf{N}(c)\} \), 
\( G=(C, \cup_{c \in C,~ c' \in \mathbf{N}(c)} (c, c')) \) ,
where the vertices represent the users and the edges denote the connections among them~\cite{lian2017can}.

In this environment, each participant \( c \) possesses a unique dataset \(D_i=\left\{\left(x_i, y_i\right)\right\}_i\) that is sampled from an undisclosed distribution \( \xi_i \). When these individual datasets are amalgamated, they form a comprehensive global dataset \( D \) characterized by the distribution \( \xi \). Initially, every participant is equipped with a common set of initial model parameters, labeled as \( w^0 \).

The objective of the training endeavor is to discover the optimal set of parameters, represented by \( \mathbf{w}^* \), for the machine learning model. \( \mathbf{w}^* \) aims to minimize the expected loss function over the global dataset \( D \).

\[
\mathbf{w}^* = \underset{\theta}{\arg \min } \frac{1}{|\mathbf{N}(c)|} \sum_{n_i \in \mathbf{N}(c)} \underbrace{\mathbb{E}_{s_i \sim D_i}\left[\mathcal{L}\left(\theta ; s_i\right)\right]}_{\mathcal{L}_i}
\]

For every client \( n_i \) within the network, \( \mathbb{E}_{s_i \sim D_i}\left[\mathcal{L}\left(\theta ; s_i\right)\right] \) determines the expected loss associated with the dataset \( D_i \) of that specific client, with \( s_i \) representing a sample drawn from \( D_i \). Consequently, the goal of this formulation is to identify the model parameters \( \theta \) that lead to the minimization of the average expected loss across all clients in the network.

\begin{algorithm}[ht]
\SetKwInOut{Input}{Input}
\SetKwInOut{Output}{Output}
\DontPrintSemicolon

\caption{Decentralized Learning Protocol}
\label{alg:DecentralizedLearning}

\Input{
    Initial model parameters $w_c^0$ for $c \in C$ \\
    User local training data: $X_c$ for $c \in C$
}
\For{$t \in [0, 1, \ldots]$}{
    \textbf{Local optimization step:}\;
    \For{$c \in C$}{
        Sample $x_c^t$ from $X_c$\;
        Update parameters: $w_c^{t+\frac{1}{2}} = w_c^t - \eta \nabla_{w_c^t}(x_c^t, w_c^t)$\;
    }
    \textbf{Communication with neighbors:}\;
    \For{$c \in C$}{
        \For{$u \in \mathbf{N}(c) \setminus \{c\}$}{
            Send $w_c^{t+\frac{1}{2}}$ to $u$\;
            Receive $w_c^{t+\frac{1}{2}}$ from $u$\;
        }
    }
    \textbf{Model updates aggregation:}\;
    \For{$c \in C$}{
        $w_c^{t+1} = \frac{1}{|\mathbf{N}(c)|} \sum_{c \in \mathbf{N}(c)} w_c^{t+\frac{1}{2}}$\;
    }
}
\end{algorithm}

We summarize the decentralized training of ML models in Algorithm \ref{alg:DecentralizedLearning}. The process unfolds through a sequence of stages until a specified termination point (e.g., convergence). Initially, each client \(c\) performs gradient descent on their unique model parameters, leading to the generation of an interim model update, denoted as \(w_c^{t+1/2}\). Subsequently, these clients exchange their interim model updates \(w_c^{t+1/2}\) with their network neighbors (\(\mathbf{N}(c)\)), while also receiving updates from these neighbors.

Following the exchange, clients aggregate the updates received from their neighbors with their own. This aggregation typically involves averaging the updates to modify their local state, expressed mathematically as \(w^{t+1} = \frac{1}{|\mathbf{N}(c)|} \sum_{c \in \mathbf{N}(c)} w_c^{t+1/2}\) that signifies the collective adjustment based on the aggregated updates.

In this study, we assume that the clients exhibit full connectivity, which implies that every client is directly connected to all other clients within each subgroup. This full connectivity framework ensures that every client can directly exchange information with every other client in its subgroup, facilitating a comprehensive and synchronous training process across the network.

\subsection{Shamir's Secret Sharing}

Shamir's Secret Sharing (SSS) Scheme~\cite{shamir1979share} allows a secret \(s\) to be divided into \(n\) pieces, known as shares, using a \(t\)-out-of-\(n\) scheme. Any group of \(t\) or more shares can be used to reconstruct the secret \(s\). The process is as follows:
\begin{itemize}[leftmargin=15pt]
    \item To generate shares from a secret \(s\), by using the function \\ \(Ss.
    Share(s, t, n)\), \(t-1\) random positive integers \(a_1, \ldots, a_{t-1}\) are selected from a finite field \(\mathbb{Z}_P\), with \(a_0 = s\) representing the secret. Let \(f(x) = a_0 + a_1x + a_2x^2 + \cdots + a_{t-1}x^{t-1} \mod P\) be a polynomial with coefficients from \(\mathbb{Z}_P\), where \(\mathcal{U} = \{u_i | u_i \in [1, P]\}\) is a set of \(n\) random numbers and \(0 < t \leq n < P\). The shares are then \(\{(u_i, s_i)\}_{u_i \in \mathcal{U}} = \{(u_i, f(u_i))\}_{u_i \in \mathcal{U}}\).
    \item To reconstruct the secret \(s\) from shares using \(Ss.Recon\\
    (\{(u_i, s_i)\}_{u_i \in \mathcal{U}}, t)\), any subset of \(t\) shares can be used for the recovery of \(s\) through Lagrange interpolation. Since \(a_0 = s\), the secret is efficiently found as \(s = a_0 = f(0) = \sum_{j=0}^t f(u_j) \\
    \prod_{i=0, i \neq j}^t \frac{u_i}{u_i - u_j} \mod P\).
\end{itemize}

Moreover, SSS method inherently supports operations that resemble additive homomorphism, allowing for the direct combination of secrets. If the index \(i\) and the corresponding element \(u_i\) of any share in the set \(\{(u_i, s_i)\}_{u_i \in \mathcal{U}}\) for a given secret \(s\) match the index and element \(v_i\) in another set \(\{(v_i, s'_i)\}_{v_i \in \mathcal{V}}\) for a different secret \(s'\), the secrets can be seamlessly combined. The combined secret \(s + s'\) can be efficiently reconstructed by applying the reconstruction function \(Ss.Recon\) on these merged shares.

The security of Shamir's \(t\)-out-of-\(n\) secret sharing ensures that for any two sets of shares \(U, V\) both less than threshold \(t\), the output of \(Ss.Recon(V)\) is indistinguishable from \(Ss.Recon(U)\), thereby maintaining the confidentiality of the secret \(s\).

\subsection{Diffie-Hellman Key Agreement Protocol}\label{DH}

The Diffie-Hellman (DH) key agreement protocol~\cite{hellman1976new} is structured around probabilistic polynomial-time (PPT) algorithms, as outlined below:

\textbf{Parameter Generation:} \((\mathbb{G}^\prime, g, q, H) \leftarrow \text{DH}\text{Param}(\kappa)\): Given a security parameter \(\kappa\), this algorithm generates a cyclic group \(\mathbb{G}^\prime\) of prime order \(q\) and a generator \(g\) of the group, and specifies a hash function \(H\), such as SHA-256, for hashing purposes.

\textbf{Key Generation:} \((s, g^s) \leftarrow \text{DHGen}(\mathbb{G}^\prime, g, q)\): This algorithm selects a secret key \(s\) from the integer group \(\mathbb{Z}_q\) and computes \(g^s\) as its corresponding public key.

\textbf{Key Agreement:} \(s_{a,b} \leftarrow \text{DHAgree}(s_a, g^{s_b})\): For a secret key \(s_a\) and a public key \(g^{s_b}\), generated by another party's secret key \(s_b\), this algorithm produces the shared secret key \(s_{a,b} = H((g^{s_b})^{s_a})\), leveraging the hash function \(H\) for the final key derivation.

The security of the DH key agreement protocol, when deployed to protect against honest-but-curious adversaries, relies on a fundamental cryptographic principle known as the Decisional Diffie-Hellman (DDH) assumption~\cite{boneh1998decision}. This assumption secures the protocol by rendering it computationally difficult to differentiate between instances of the DH exchange and random group elements.

\section{Problem Definition}

Recall that in a collaborative learning setting, an honest client aims to obtain an aggregated model update that integrates local models trained on other clients' private datasets. The hidden aggregation of these updates helps mitigate privacy risks to client datasets in two main ways. First, it protects individual local models from malicious scrutiny by obscuring specific details, preventing adversaries from identifying and exploiting exact data sources. Second, it reduces the impact of any single client's input by merging contributions from multiple clients into the aggregated model.

In FL, a central server collects model updates from clients before distributing the aggregated results back to them. This centralized approach ensures that only the server can access each client's individual updates. However, in DL, the absence of a central server elevates privacy risks, as updates must be shared directly between clients for aggregation~\cite{pasquini2023security}. These risks can be mitigated by implementing secure aggregation schemes.

Moreover, when secure aggregation is implemented in a collaborative learning setting, client dropout poses a significant challenge~\cite{bonawitz2017practical}. Clients may exit the aggregation process before sharing their model updates due to latency, network connectivity issues, or unexpected shutdowns. This can occur before they share their updates with any other clients or after having distributed them to everyone. Since dropout is a prevalent concern in deployed systems, designing systems resilient to client dropouts up to a certain threshold is vital. 

This work presents privacy-preserving DL algorithms that are resilient to dropout, ensuring the model remains accurate and operationally efficient.

% The dropout problem can be described as the need to maintain a minimum threshold of active participants. This threshold is based on the number of semi-honest clients plus an additional buffer to accommodate the potential loss of honest clients.  The proposed aggregation scheme should guarantees the the accurate completion of the aggregation amid fluctuating participant availability.

% Our proposed privacy-preserving methods are secured by a privacy-preserving aggregation rule that prevents any client from directly accessing the model updates of another client, instead allowing only visibility to aggregated outcomes.

\subsection{Threat model}

% Our threat model consists of a landscape dominated by non-malicious yet inquisitive clients, operating under the assumption of an honest majority. 
% While adhering to the established protocols of the system, these `honest-but-curious' clients want to glean or infer additional information about their peers through the interactions of DL. 
% Their modus operandi does not involve transmitting incorrect or manipulated data; instead, their curiosity drives them to derive as much insight as possible from the communication flows inherent in the learning process. This inclination, albeit not malicious in intent, poses a significant threat to privacy and data confidentiality within the decentralized network. 

% The potential for nodes to unpredictably dropout or exhibit intermittent inactivity adds a layer of complexity to this scenario. Such behavior challenges the integrity of the learning process and could potentially be exploited to amplify privacy risks. Therefore, our model emphasizes the need for sophisticated mechanisms that can adeptly manage the dual concerns of curbing the curiosity-driven inference attacks and ensuring the resilience of the learning process amidst the uncertainties of client dropouts, thereby safeguarding private local data and providing privacy assurances when deployed in realistic environments.

% \ben{The problem statement and Threat model feels like it's repeating things.}

% \ben{perhaps a unified one, with two clear headings a) privacy risks b) dropout}

The threat model considered in this paper encompasses a landscape dominated by non-malicious yet curious clients operating under the assumption of an honest majority. While adhering to the established protocols of the system, these 'honest-but-curious' clients want to obtain or infer additional information about their peers through the interactions of DL. Their approach does not involve transmitting incorrect or manipulated data; instead, they aim to derive as much insight as possible from the communication flows inherent in the learning process. Adding to this complexity, we consider the scenario where a semi-honest client may collude with others. This collusion aims to pool resources and shared insights to enhance their collective ability to infer private information beyond what is intended through the system's interactions.

Therefore, the threat model underscores the need for advanced mechanisms to manage individual curiosity-driven inference attacks and collaborative efforts among semi-honest clients. These mechanisms are vital for providing robust privacy assurances when deployed in realistic environments, protecting private local data against these enhanced potential privacy risks.

\section{Approaches}
\label{sec:approaches}
% The primary aim of secure aggregation is to ensure that the receiving client cannot access individual model updates, which is crucial for maintaining participant data privacy. Each client's contribution is merged during the aggregation process so that their specific model updates remain undisclosed. Only the aggregated result, a collective compilation of all updates, is revealed after a sophisticated averaging procedure.
In this section, we introduce three distinct approaches for implementing secure aggregation in DL environments, emphasizing the incorporation of dropout resilience. 

% This adaptability ensures that the aggregation process remains unaffected by client dropout, maintaining the integrity and effectiveness of the secure aggregation framework.

% \ben{how would you describe the most significant difference between the approach's? the technique novelty? the efficiency?}

\subsection{SA using Shamir's secret sharing}\label{SSS}

% A straightforward strategy to achieve dropout resilience in privacy-preserving aggregation for deep learning (DL) environments might utilize Shamir's Secret Sharing.

This method allows clients to work together to build a shared model vector, \(W\), enhancing resilience against dropout by enabling the model's reconstruction even if some clients do not contribute. Each client participates in the aggregation process by supplying vectors made up of field elements from a field of size \(q\). This arrangement aligns with the objectives of the decentralized model. The approach not only addresses the complexities of distributed computation but also introduces an effective mechanism for managing client dropout. It ensures the confidentiality and integrity of the model aggregation process, eliminating the need for a centralized server.

% We refer to the direct application of SSS to provide dropout resilience in privacy-preserving aggregation within DL settings, naive \ppdl (\ppdlbase).
% This method enables clients to collaboratively construct a shared model vector \( W \), facilitating dropout resilience by allowing the reconstruction of the model even when some clients fail to contribute. Each client contributes in the aggregation by providing vectors composed of field elements of length \( l \), which aligns with the decentralized model's objectives. This approach not only supports the distributed computation complexities but also provides a mechanism to handle client dropout effectively, ensuring the confidentiality and integrity of the model aggregation process without reliance on a centralized server. 

\begin{algorithm}
\SetKwInOut{Input}{Input}
\SetKwInOut{Output}{Output}
\DontPrintSemicolon

\caption{\ppdlbase: Secure Aggregation via Shamir's Secret Sharing for PPDL}
\label{alg:DecentralizedSSS}

\Input{
    $n$ clients $C_0, C_1, \dots, C_n$ each with a local training dataset $D_i$, for $i = 0, 1, \dots, n$ \\
    Number of global iterations $R_g$ \\
    Shamir's threshold $t$ (e.g., $t=(n/2)+1$)
}
\Output{A globally trained model \(\mathbf{w}\) for each client}
\For{$r = 1$ \KwTo $R_g$}{
    \tcp{Initialization step}
    All clients start with the same initial model \(\mathbf{w}^0\).\;
    \For{$i = 0$ \KwTo $n$}{
        Train local model: \(\mathbf{w}_i \leftarrow \text{LocalUpdate}(\mathbf{w}^{r-1}, D_i)\)\;
        
        Generate and distribute $n$ shares of \(\mathbf{w}_i\), denoted as \(\llbracket \mathbf{w}_i \rrbracket_j\) for \(j=0,1,\ldots,n\), to all clients, including itself\;
        
        Receive shares \(\llbracket \mathbf{w}_j \rrbracket_i\) from all other clients\;
        
        Aggregate own share with received shares from other clients: 
        \(\llbracket \mathbf{\bar{w}} \rrbracket_i = \sum_{\substack{j=0 \\ j \neq i}}^{n} \llbracket \mathbf{w}_j \rrbracket_i+ \llbracket \mathbf{w}_i \rrbracket_i\)\;
        
        Transmit \(\llbracket \mathbf{\bar{w}} \rrbracket_i\) to all other clients and receive \(\llbracket \mathbf{\bar{w}} \rrbracket_j\) from them\; 
        
        \If{total shares received $\geq t$}{
            Perform Shamir Reconstruction to obtain \(\mathbf{\bar{w}}\) and update global model: \(\hat{\mathbf{w}}^{r} = \frac{1}{n}\mathbf{\bar{w}}\)\;
        }
    }
    % \tcp{Proceed to the next iteration with updated global model}
}
\end{algorithm}

In this approach summarized in Algorithm~\ref{alg:DecentralizedSSS}, each client \( P_i \) generates \( n \) shares for each element of their input vector \( w_i \), adhering to a predetermined threshold \( t \). The threshold \(t\) should exceed the number of semi-honest clients to protect the integrity and confidentiality of the aggregation process. Next, each client distributes one share to the other clients while retaining a share for itself. Subsequently, clients aggregate the shares of the secret value received from others with their own and broadcast the cumulative value to all participants. Once sufficient shares of these combined values are collected, every client can reconstruct the aggregated value.

% To handle dropout clients within our synchronous decentralized learning approach, a client may dropout either before broadcasting their model updates to any clients or after broadcasting their model updates shares to all users. In this context, ensuring that the aggregation process can tolerate the dropout of clients up to a specific limit is critical. Specifically, the system must be designed so that the total number of dropouts does not cause the number of participating clients to fall below the threshold \(t\), determined by the sum of semi-honest clients and an additional margin to protect against the dropout of honest clients. This protection ensures that the aggregation process remains resilient to dropouts, thereby upholding the system's overall integrity and confidentiality despite the potential inconsistencies in participant availability.

Although this approach inherently tolerates dropouts, the inefficiency of traditional Shamir's Secret Sharing in a decentralized setting primarily stems from the extensive number of secret values that need to be shared. These secret values correlate directly with the number of model weights, resulting in excessive communication and computation overheads.

Improvements to this inefficiency can be achieved by utilizing Packed Shamir Secret Sharing~\cite{franklin1992communication}, a drop-in adaptation of Shamir's Secret Sharing in all three of our proposed approaches. Packed Shamir Secret Sharing significantly improves communication and computation efficiency by including multiple values within a single share. The packed secret sharing method has the capacity to split $k$ secret values into $n$ shares while requiring at least $t+k$ shares to recover the secret. Despite these improvements, this proposed scheme's naive application, still  observes the vector size for secret sharing still correlating with the model's parameter size. The packing only lowers the per-client communication burden, particularly with large-scale models.

\subsubsection{Security Analysis}\label{SSS_secure}
In this part, we explore the security features of the \ppdlbase algorithm in detail:

\begin{theorem}
The \(\ppdlbase\) protocol, which utilizes SSS to distribute model updates as \(n\) shares within a finite field \(\mathbb{F}_q\) among clients, ensures security against semi-honest adversaries and exhibits resilience to client dropout in the honest majority setting.\end{theorem}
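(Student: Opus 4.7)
The plan is to give a simulation-based argument that decomposes the theorem into two parts: (i) privacy against any coalition $A$ of semi-honest clients with $|A| < t$, and (ii) robust reconstruction whenever at least $t$ clients complete the broadcast phase. Under the honest-majority assumption and the choice $t = \lfloor n/2 \rfloor + 1$, both conditions are guaranteed simultaneously, because the adversarial coalition has at most $\lfloor n/2 \rfloor < t$ members, while the surviving set after minority dropout still contains at least $t$ broadcasting clients.

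For the privacy part, I would construct an ideal-world simulator $\mathsf{Sim}$ that, given only $\{w_i\}_{i \in A}$ and the aggregate $\bar{w}$ returned by the ideal functionality, produces a transcript statistically indistinguishable from the adversary's real view. The view contains three message types: the shares $[w_j]_i$ sent by honest $j$ to each $i \in A$, the partial sums $[\bar{w}]_i$ computed by $i \in A$, and the honest broadcasts $[\bar{w}]_j$ for $j \notin A$. For the first type, the threshold secrecy of Shamir's scheme stated in Section~\ref{SSS} ensures that the $|A| < t$ shares of each $w_j$ are uniform in $\mathbb{F}_q^{|A|}$ and independent of $w_j$, so $\mathsf{Sim}$ samples them uniformly. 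For the third type, $\mathsf{Sim}$ samples a random polynomial of degree $t - 1$ over $\mathbb{F}_q$ constrained to pass through $\bar{w}$ at the origin and through the $|A|$ partial-sum values implied by the first step, and outputs its evaluations at $\{u_j\}_{j \notin A}$ as the honest broadcasts. Consistency follows from the additive homomorphism of Shamir sharing, and indistinguishability follows from the fact that, in the real protocol, each non-constant coefficient of the sum polynomial is a sum that includes at least one uniform honest contribution and is therefore uniform in $\mathbb{F}_q$.

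For dropout resilience, the reconstruction step of Algorithm~\ref{alg:DecentralizedSSS} succeeds whenever a client gathers at least $t$ broadcasts of $[\bar{w}]_j$, since any $t$ evaluations of a degree $t - 1$ polynomial determine it via Lagrange interpolation and, by additive homomorphism, recover $\bar{w} = \sum_i w_i$ as the constant term. The protocol therefore tolerates any dropout pattern of size at most $n - t$, which under honest majority and $t = \lfloor n/2 \rfloor + 1$ accommodates any minority of unavailable clients without requiring a protocol restart.

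The hardest step is the third message type in the privacy argument: the honest broadcasts, taken together with the adversary's $|A|$ shares and the recovered $\bar{w}$, supply far more than $t$ evaluations of the sum polynomial, so one must verify that these redundant evaluations leak only the aggregate $\bar{w}$ and not any partial sum or individual $w_j$. The resolution rests on showing that, conditioned on $\bar{w}$ and the adversary's inputs, the non-constant coefficients of the sum polynomial are uniformly distributed over $\mathbb{F}_q^{t-1}$, so the entire polynomial is a deterministic function of $\bar{w}$, the adversary's view, and independent honest randomness that $\mathsf{Sim}$ can faithfully reproduce. Formalizing this cleanly, particularly in the presence of adaptive dropout choices made mid-round by the adversary, is where the main technical care will be required.
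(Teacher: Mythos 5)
Your proposal is correct in outline but takes a genuinely different and substantially more rigorous route than the paper. The paper's proof never constructs a simulator: it only restates the standalone secrecy of Shamir's scheme (fewer than $t$ shares of a single secret leave the secret information-theoretically hidden, with reconstruction via Lagrange interpolation from $t$ shares) and asserts dropout resilience from the threshold property. In particular, the paper does not analyze the second round of Algorithm~\ref{alg:DecentralizedSSS}, in which every client broadcasts its aggregated share $\llbracket \mathbf{\bar{w}} \rrbracket_i$ to everyone, so the adversary ends up holding all $n$ evaluations of the sum polynomial \emph{together with} up to $|A|$ shares of each individual $\mathbf{w}_j$. This is exactly the composition issue you isolate as the hard step, and your resolution is the right one: conditioned on the adversary's $|A| < t$ shares of each honest polynomial and on the aggregate $\bar{w}$, the sum polynomial is uniform over the affine space of degree-$(t-1)$ polynomials meeting those constraints (the Vandermonde submatrix on the honest coefficients has full rank, and a single honest client's fresh randomness suffices), so the simulator can sample the honest broadcasts faithfully. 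Your approach buys an actual semi-honest security proof in the standard ideal/real paradigm, proving that the full transcript reveals nothing beyond $\bar{w}$; the paper's argument buys only the (weaker) statement that the first-round shares alone hide each $\mathbf{w}_j$. The one residual gap, which you flag but neither you nor the paper resolves, is mid-round dropout during share distribution: if a client sends first-round shares to some but not all peers, different honest clients aggregate inconsistent sets of contributions and the broadcast values no longer lie on a single polynomial, so correctness of reconstruction needs either an agreement step on the surviving set or an argument that such partial sends are excluded by the communication model.
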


\begin{proof}

In \ppdlbase protocol, Shamir's Secret Sharing (SSS) ensures the confidentiality of model updates. Each element of a model update \(s\) is divided into \(n\) shares within a finite field \(\mathbb{F}_q\) (assuming \(n < q\)), according to \(A(x) = \sum_{i=0}^{t-1} a_i x^i\), where \(a_0 = s\) and the coefficients \(a_i\) are randomly and uniformly chosen from \(\mathbb{F}_q\). This approach utilizes unique, non-zero points \(x_i\) in \(\mathbb{F}_q\), ensuring each share \(v_i = A(x_i)\) contains part of the secret without fully disclosing it.

The security of SSS stems from the fact that a polynomial of degree \(t-1\) can only be uniquely determined by any \(t\) distinct evaluations (\(v_i\)), making it mathematically infeasible to reconstruct \(s\) from fewer than \(t\) shares. This is due to there being \(q^{t-1}\) potential polynomials that could correspond to any given set of \(t-1\) shares, preserving the secrecy of \(s\).

This mechanism also underpins the protocol's resilience to client dropout, as the threshold \(t\) can be adjusted to maintain confidentiality despite partial participation. In essence, for any subset of \(t-1\) shares, the distribution of potential secrets remains uniform, ensuring no leakage of information about \(s\).

For reconstruction, the protocol employs Lagrange polynomials \(L_i\) for each share \(i\), calculated as \(L_i = \frac{\prod_{j\neq i}(X-x_j)}{\prod_{j\neq i}(x_i-x_j)}\), where \(L_i(x_i) = 1\) and \(L_i(x_j) = 0\) for all \(j \neq i\). This enables the secure computation of \(A(0) = s = \sum_{i=1}^t v_i L_i(0)\) from precisely \(t\) shares.

\end{proof}

\subsection{SA using LWE-Based Masking}

As we mentioned previously, packed secret sharing significantly improves the communication efficiency of \ppdlbase, however, the quantity of values that need to be secret shared still directly correlates to the size of the model update vector. 
To address this limitation in our second proposal, we draw inspiration from the work of Stevens et al.~\cite{stevens2022efficient} and propose the use of a dropout-resilient Differentially Private (DP) secure protocol grounded with the Learning With Errors (LWE) assumption for a decentralized learning setting. This second approach which we call \ppdlerror, each client generates a one-time pad of equal length to their model parameters and broadcast it to all other clients. Utilizing SSS, clients collaboratively sum their masks and then subtract each client's mask from the aggregated model updates value, thereby enhancing both the efficiency and security of the process.

\begin{algorithm}
\SetKwInOut{Input}{Input}
\SetKwInOut{Output}{Output}
\DontPrintSemicolon

\caption{\ppdlerror: Secure Aggregation via LWE-Based Masking for PPDL}
\label{alg:lwe}

\Input{
    $n$ clients $C_0, C_1, \dots, C_n$ each with a local training dataset $D_i$, for $i = 0, 1, \dots, n$ \\
    Number of global iterations $R_g$ \\
    Public \(m \times n\) matrix \(A\) over finite field \( \mathbb{F}_q \)
}
\Output{A globally trained model \(\mathbf{w}\) for each client}
\For{$r = 1$ \KwTo $R_g$}{
    \tcp{Initialization step}
    All clients start with the same initial model \(\mathbf{w}^0\).\;
    \For{$i = 0$ \KwTo $n$}{
        Train local model: \(\mathbf{w}_i \leftarrow \text{LocalUpdate}(\mathbf{w}^{r-1}, D_i)\)\;
        
        Generate secret vector \(\mathbf{s}_i\) of size \(n\)\;
        Generate random noise vector \(\mathbf{e}_i\) of size \(m\)\;
        Compute masking vector \(\mathbf{b}_i \leftarrow A\cdot\mathbf{s}_i + \mathbf{e}_i\)\;
        Mask model parameters: \(\mathbf{h}_i \leftarrow \mathbf{w}_i + \mathbf{b}_i\)\;
        Transmit \(\mathbf{h}_i\) to all clients\;
        Receive \(\mathbf{h}_j\) from all other clients\;
        
        Distribute \(\llbracket \mathbf{s}_i \rrbracket\) to all clients\;
        Collect \(\llbracket \mathbf{s}_j \rrbracket\) from all clients\;
        
        \tcp{Using Secret Sharing for aggregation}
        Compute \(\mathbf{s}_{\text{sum}} \leftarrow \sum_{j} \mathbf{s}_j\)\;
        
        Aggregate masked model parameters: \(\mathbf{H}_{\text{sum}} \leftarrow \sum_{j} \mathbf{h}_j\)\;
        Compute aggregated model update: \(\mathbf{W}_{\text{sum}} \leftarrow \mathbf{H}_{\text{sum}} - A\cdot\mathbf{s}_{\text{sum}}\)\;
        
        Update global model: \(\hat{\mathbf{w}}^{r} \leftarrow \frac{1}{n}\mathbf{W}_{\text{sum}}\)\;
    }
    % \tcp{Proceed to the next iteration with updated global model}
}
\end{algorithm}

As illustrated in Algorithm~\ref{alg:lwe}, clients share a public \(m \times n\) matrix \(A\), where each element of the matrix is from the finite field \( \mathbb{F}_q \). Each client \(i\) independently generates a secret vector \(s_i\) of size \(n\) and an error vector \(e_i\) of size \(m\), both derived from the same distribution. The client then computes a vector \(b_i = A\cdot s_i + e_i\), where \(A\) 
% \(b_i = A \cdot s_i + e_i\)
is the public \(m \times n\) matrix. This forms an LWE sample with the pair \((A, b_i)\), where \(b_i\) effectively serves as a one-time pad to mask the client's model update \(w_i\), resulting in a masked vector \(h_i = w_i + b_i\).

Next, clients transmit \(h_i\) to other clients and subsequently compute the aggregation of the received vectors, \(h_{\text{sum}}\), through vector addition. The \(h_{\text{sum}}\) is composed of \(w_{\text{sum}}+A s_{\text{sum}}+e_{\text{sum}}\), where \(e_{\text{sum}}\) is the added noise that satisfies the \((\varepsilon, \delta)\)-DP criterion, and \(w_{\text{sum}} + e_{\text{sum}}\) forms the noisy aggregated value of the model updates from other clients. Therefore, to eliminate \(A s_{\text{sum}}\) from \(h_{\text{sum}}\), clients only need to compute \(s_{\text{sum}}\). To determine \(s_{\text{sum}}\) without disclosing the individual \(s_i\) values, clients collaboratively use the SSS protocol, as introduced in Section~\ref{SSS}. Finally, each client, using the shared matrix \(A\) and \(s_{\text{sum}}\), computes \(h_{\text{sum}}\) as the aggregated value.

To ensure this protocol's effectiveness, two key points are crucial for balancing privacy and performance. First, protecting the confidentiality of individual error vectors \(e_i\) is essential, leveraging the LWE assumption for privacy. By aggregating these vectors into a cumulative noise \(e_{\text{sum}}\) via discrete Gaussians, the protocol achieves differential privacy, offering a precise yet privacy-aware estimation of \(w_{\text{sum}}\). Second, the dimension of the secret vector \(S\) plays a vital role in enhancing security. Stevens et al.~\cite{stevens2022efficient} emphasize the importance of keeping \(S_i\)'s length at a minimum of 710 for robust security, regardless of the secret input vector \(w_i\)'s dimensionality. These elements are critical in creating a strong decentralized learning framework that effectively addresses privacy and efficiency challenges. Beyond the specific focus of adversarial robustness, adversarial training has been shown to increase the generalization, and thus the model performance on regular binaries too.

\subsubsection{Security Analysis}
In this subsection, we discuss the security provided by \ppdlerror algorithm in detail:

\begin{theorem}

In settings with an honest majority, the \(\ppdlerror\) protocol is secure against semi-honest adversaries by leveraging the LWE assumption for the encryption of model updates and utilizing SSS for the secure aggregation of clients' secret vectors. This protocol exhibits resilience to client dropout, ensuring the confidentiality and integrity of the DL process.

\end{theorem}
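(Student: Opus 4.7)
The plan is to reduce the security of \ppdlerror to two independent pillars: the hardness of the decisional LWE problem, which hides each honest client's broadcast $\mathbf{h}_i = \mathbf{w}_i + A\mathbf{s}_i + \mathbf{e}_i$, and the security of Shamir's Secret Sharing already established for \ppdlbase, which hides the individual secret vectors $\mathbf{s}_i$ whenever fewer than $t$ clients are corrupted. The overall shape of the argument I would follow is a simulation-based indistinguishability proof: construct a simulator $\mathcal{S}$ that, given only the inputs of the corrupted coalition and the publicly revealed aggregate $\hat{\mathbf{w}}^{r}$, produces a transcript computationally indistinguishable from the real protocol view.

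First I would fix a semi-honest coalition $\mathcal{A}$ of size at most $t-1$ and enumerate precisely what it observes for each honest client $i\notin\mathcal{A}$: the broadcast vector $\mathbf{h}_i$, at most $t-1$ Shamir shares of $\mathbf{s}_i$, and eventually the reconstructed sum $\mathbf{s}_{\text{sum}}$. By the decisional LWE assumption, the pair $(A, A\mathbf{s}_i+\mathbf{e}_i)$ is computationally indistinguishable from $(A,\mathbf{u})$ with $\mathbf{u}$ uniform in $\mathbb{F}_q^m$, so on its own $\mathbf{h}_i$ leaks nothing about $\mathbf{w}_i$. By the SSS argument of the previous theorem, the fewer-than-$t$ shares in $\mathcal{A}$'s hands are distributed independently of the underlying $\mathbf{s}_i$.

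The step I expect to be the main obstacle is combining these two guarantees once $\mathbf{s}_{\text{sum}}$ is published, because the simulator must now output honest-client masks that are consistent both with $\mathcal{A}$'s known $\mathbf{s}_j$ and with the public value $\mathbf{s}_{\text{sum}}$. My plan is a standard hybrid argument: starting from the real execution, I would replace each honest client's $A\mathbf{s}_i+\mathbf{e}_i$ term by a fresh uniform vector one client at a time, invoking a single LWE challenge per hop and routing $\mathcal{A}$'s share view through the SSS simulator. A final bookkeeping step re-allocates one designated honest client's contribution so that the unmasked total matches $\hat{\mathbf{w}}^{r}$, which is feasible because the sum is the only function of $\{\mathbf{w}_i\}_{i\notin\mathcal{A}}$ that the protocol ever exposes.

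Dropout resilience then follows almost for free from the threshold property of SSS: as long as at least $t$ surviving clients supply shares of their $\mathbf{s}_i$, the aggregate $\mathbf{s}_{\text{sum}}$ can still be reconstructed and unmasking proceeds, while the confidentiality argument above only required that the corrupt set hold fewer than $t$ shares of any honest $\mathbf{s}_i$, a condition preserved under dropout. I would close by observing that the residual noise $\mathbf{e}_{\text{sum}}$ surviving the unmasking step delivers the $(\varepsilon,\delta)$-differential-privacy guarantee on the released aggregate via the standard Gaussian mechanism analysis cited from Stevens et al., so this component can be invoked rather than re-derived from scratch.
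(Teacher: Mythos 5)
Your decomposition into two pillars---decisional LWE hiding each broadcast \(\mathbf{h}_i = \mathbf{w}_i + A\mathbf{s}_i + \mathbf{e}_i\), and Shamir's threshold property hiding each \(\mathbf{s}_i\) from a coalition holding fewer than \(t\) shares---is exactly the decomposition the paper uses, so the substance of your argument is aligned with the intended proof. Where you diverge is in rigor and form: the paper's proof is an informal, assertion-level justification (it states that LWE makes the masked update indistinguishable from random and that SSS prevents reconstruction below threshold, and stops there), whereas you propose a full simulation-based argument with a per-client hybrid over LWE challenges and an explicit simulator for the coalition's view. The most valuable thing you add is the observation that publishing \(\mathbf{s}_{\text{sum}}\) creates a consistency constraint: the coalition can subtract its own secrets to learn \(\sum_{i\notin\mathcal{A}}\mathbf{s}_i\), so the honest masks are not jointly independent of the adversary's view, and the hybrid must end with a bookkeeping step that pins one designated honest client's contribution to match the revealed aggregate. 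The paper does not address this interaction between the two pillars at all, nor does it fold the residual \(\mathbf{e}_{\text{sum}}\) noise into a differential-privacy claim inside the proof (that discussion lives only in the protocol description). In short, your route proves a strictly stronger, properly composed statement; the paper's route buys brevity at the cost of leaving the composition of LWE-masking with the public reconstruction of \(\mathbf{s}_{\text{sum}}\) unargued. The one point to make explicit if you write this up is the quantitative relationship between the corruption bound, the dropout bound, and the threshold \(t\): you need the surviving honest set to reach \(t\) for reconstruction while the corrupt set stays below \(t\), and both conditions should be stated as hypotheses rather than left implicit in ``honest majority.''
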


\vspace{-2mm} 

\begin{proof}
The cornerstone of the \(\ppdlerror\) protocol's security lies in its employment of the LWE assumption, a principle positing that distinguishing between genuine LWE samples \((A, A \cdot s_i + e_i)\) and random pairs from the same distribution is computationally infeasible for any polynomial-time adversary. Here, \(A\) represents a publicly known matrix, \(s_i\) is a secret vector selected by the client, and \(e_i\) is an error vector with elements drawn from a discretized Gaussian distribution. The distribution's parameters are chosen to maintain the encryption's integrity while enabling efficient decryption by legitimate entities.

The intractability of the LWE problem, rooted in the computational complexity of solving certain lattice problems deemed hard for quantum computers, underpins the protocol’s resistance to quantum attacks. Without exact knowledge of \(s_i\) and \(e_i\), decryption of the encrypted model update \(h_i = w_i + A \cdot s_i + e_i\) by adversaries is rendered impractical. This encryption model ensures that data confidentiality is upheld, even in the face of semi-honest clients who follow the protocol yet attempt to extract additional information.

Further enhancing the protocol’s security is the application of SSS for aggregating the secret vectors \(s_i\) into a collective sum \(s_{\text{sum}}\). The SSS mechanism necessitates a predetermined number of shares to reconstruct any individual secret vector, thereby precluding semi-honest clients from independently accessing or inferring another's secret vector. The collaborative computation of \(s_{\text{sum}}\), essential for decrypting the combined encrypted vectors \(h_{\text{sum}}\), intrinsically protects against the potential disclosure of sensitive data to semi-honest participants during the aggregation phase.

\end{proof}

\subsection{SA with pairwise masking}

% The third proposed approach involves each node in the network masking its data before sharing, ensuring that the aggregated data remains private and secure throughout the process. At the core of this technique is the use of sophisticated cryptographic methods to maintain data integrity and confidentiality. A notable implementation of this concept in Federated Learning seeting can be found in the protocol proposed by Bonawitz et al. \cite{bonawitz2017practical} , which has been specifically designed to address the unique challenges of secure aggregation in decentralized settings.

% This protocol effectively reduces per-client communication costs to \(O(n + l)\) by leveraging pairwise masking. The process is based on two clients, say \(A\) and \(B\), using complementary masks on their input vectors. Specifically, client \(A\) adds a random mask to their data, while client \(B\) subtracts the same mask from theirs. When these masked vectors are combined, the mask is nullified, revealing the sum of the original vectors without compromising the privacy of individual data, thanks to the randomness of the masks.

Building on the principles of the previous approach, the third proposed protocol emphasizes masking clients' model parameters before transmission to other clients. This method introduces a novel aspect where clients engage in pairwise sharing of a seed for a pseudo-random number generator (PRNG)~\cite{blum2019generate,narayanan2008robust}, allowing each client to produce a shared vector tailored to the desired size locally. This technique addresses the unique challenges of secure aggregation by incorporating principles akin to those explored by Bonawitz et al. \cite{bonawitz2017practical} in their work on Federated Learning. It offers a refined solution for enhancing privacy and security in decentralized learning environments.

In this protocol, each client \(C_i\) generates a unique pairwise mask for their vector relative to every other neighbouring client \(C_j\). The collective result of combining these masked vectors \(\mathbf{y}_i\) is the aggregation of the original inputs, effectively concealing the individual vectors \(\mathbf{w}_i\). The generation of these masks is facilitated by a PRNG, which uses a result derived from each client's public keys, exchanged during the initial steps of the protocol. This mechanism guarantees a secure and agreed-upon method for mask generation, ensuring the confidentiality of the data during the aggregation process.

\[
y_i = w_i + \sum_{j: i<j} \text{PRNG}(s_{i, j}) - \sum_{j: i>j} \text{PRNG}(s_{j, i})
\]

The Diffie-Hellman key exchange protocol provides a robust method for generating a common seed for a PRNG, essential for creating pairwise masks in secure communications. After generating and exchanging public keys \(g^{a_i} \mod q\), each client uses their private key \(a_i\) in conjunction with the received public keys from other clients to generate a shared secret seed \(s_{i, j}\). This seed initializes the PRNG, allowing each pair of clients to agree on a complementary sequence of random numbers for mask generation, facilitating secure and pairwise synchronized communications across the network.

% Client A computes the shared secret as \( (g^b)^a \) mod \( q \), and Client B computes it as \( (g^a)^b \) mod \( q \). Due to the properties of exponentiation in modular arithmetic, these computed values are identical, resulting in a shared secret key \( g^{ab} \) mod \( q \).

% Let's denote the private key of Client A as \(a\) and that of Client B as \(b\). The public keys are derived from these private keys using a common base \(g\) and a prime number \(q\) (representing the group order). Hence, the public key of Client A is \(g^a \mod q\), and the public key of Client B is \(g^b \mod q\).

% These public keys are exchanged between Client A and Client B without risk of compromise. Following the exchange, each client uses the received public key and their own private key to compute a shared secret key. Client A computes the shared secret as \( (g^b)^a \) mod \( q \), and Client B computes it as \( (g^a)^b \) mod \( q \). Due to the properties of exponentiation in modular arithmetic, these computed values are identical, resulting in a shared secret key \( g^{ab} \) mod \( q \).
\begin{algorithm}
\SetKwInOut{Input}{Input}
\SetKwInOut{Output}{Output}
\DontPrintSemicolon
\caption{\ppdlpairs: Secure Aggregation with Pairwise Masking and Diffie-Hellman (DH) Key Exchange}
\label{alg:DecentralizedBonawitz}

\Input{
    \(n\) clients \(C_0, C_1, \dots, C_n\) each with a local training dataset \(D_i\), for \(i = 0, 1, \dots, n\) \\
    Number of global iterations \(R_g\) \\
    Common base \(g\) and prime \(q\) for DH key exchange
}
\Output{A globally trained model \(\mathbf{w}\) for each client}
\For{\(r = 1\) \KwTo \(R_g\)}{
    \tcp{Initialization step}
    All clients start with the same initial model \(\mathbf{w}^0\).\;
    \For{\(i = 0\) \KwTo \(n\)}{
        Train local model: \(\mathbf{w}_i \leftarrow \text{LocalUpdate}(\mathbf{w}^0, D_i)\)\;

        % Diffie-Hellman Key Exchange
        Generate private key \(a_i\) and compute public key \(g^{a_i}\)\;
        Exchange public keys \(g^{a_i}\) with other clients\;
        
        % Secret sharing of private key
        Distribute shares of \(a_i\) using SSS to each client\;

        % Generation of masks using shared secret and secret sharing
        \For{each client \(C_j\) where \(j \neq i\)}{
            Receive public key \(g^{a_j}\) from client \(C_j\)\;
            Compute shared secret \(s_{i, j} = (g^{a_j})^{a_i} \)\; % This is the g^ab calculation
            Generate pairwise masks PRNG(\(s_{i, j}\))\;
        }
        Generate personal mask \(b_i\) using PRNG\;
        \(\mathbf{y}_i \leftarrow \mathbf{w}_i + b_i\)\;
        
        \For{each client \(C_j\) where \(j \neq i\)}{
            \eIf{\(i < j\)}{
                \(\mathbf{y}_i \leftarrow \mathbf{y}_i + \text{PRNG}(s_{i, j})\)\; % Pairwise masks addition
            }{
                \(\mathbf{y}_i \leftarrow \mathbf{y}_i - \text{PRNG}(s_{j, i})\)\; % Pairwise masks subtraction
            }
        }
        % Broadcasting aggregated masked update
        Broadcast \(\mathbf{y}_i\) to all other clients\;
    }
    
    % Collecting and aggregating all received masked updates
    Initialize an empty vector \(\mathbf{V}\) for aggregation\;
    \For{each received masked vector \(\mathbf{y}_j\) from clients}{
        \(\mathbf{V} \leftarrow \mathbf{V} + \mathbf{y}_j\)\;
    }
    
    % Handling client dropout
    \If{a client \(C_k\) drops out}{
        % Collect shares of \(a_k\) to reconstruct the private key
        Collect shares of \(a_k\) from remaining clients to reconstruct \(a_k\)\;
        \For{each client \(C_j\) where \(j \neq k\)}{
            % Recompute the shared secret and the pairwise masks for dropped client
            Compute shared secret \(s_{j, k}\) and \(s_{k, j}\) using \(a_k\)\;
            Adjust \(\mathbf{V}\) by removing the mask contributions;
        }
    }
    
    % Final aggregation and global model update
    \(\hat{\mathbf{w}}^{r} \leftarrow \frac{1}{n}\mathbf{V}\); % Update to match the representation style
}
\end{algorithm}

Similar to earlier approaches, we tackle the issue of clients dropping out after broadcasting their mask vector. A straightforward approach is having the remaining clients remove the dropout mask and resend the updated vector. However, this brings us a new challenge: more clients may drop out during this recovery phase before sharing their seeds, requiring additional recovery phases for these new dropouts. As a result, this approach could trigger a cascade of recoveries, possibly equaling the total number of clients and complicating the process.

To circumvent this issue, we employ the threshold secret sharing protocol introduced in Section~\ref{SSS}, where users distribute a share of their private key to all other clients. In the case of a client dropping out after sending its masked model update vector, remaining client utilizes the SSS to reconstruct the dropout's private key. This enables the recalculation and removal of the dropout's mask from the aggregated value. While this solution necessitates additional steps for seed recovery and mask recalibration, it ensures the integrity of the aggregation process is maintained, even in the event of client dropouts.

% In contrast to Federated Learning where client dropout can pose a serious issue, in decentralized learning environments, the protocol is designed to mitigate the impact of dropouts. In these scenarios, if a client drops out during the protocol's execution, the remaining clients can adjust by removing the mask that was generated by the dropout user. This ensures the accuracy of the aggregate sum. The key factor in this approach is that each node should have at least two neighbors. 

\begin{table*}[ht]
\resizebox{0.7\linewidth}{!}{
\begin{tabular}{|c|c|c|c|c|c|c|}
\hline
\# of users          & Dataset       & DL         & \ppdlbase     & \ppdlerror                        & \ppdlpairs   & DLDP                       \\ \hline
\multirow{4}{*}{50}  & MNIST         & $96.56\%$  & $96.52\%$  & $89.12 \%(\epsilon=9.25)$  & $96.22\%$  & $89.85 \%(\epsilon=9.25)$  \\
                     & CIFAR-10      & $74.46 \%$ & $73.98 \%$ & $62.12 \%(\epsilon=11.31)$ & $73.9 \%$  & $62.74 \%(\epsilon=11.31)$ \\
                     & SVHN          & $92.92\%$  & $92.61\%$  & $83.03\%(\epsilon=11.00)$  & $91.82\%$  & $83.07\%(\epsilon=11.00)$  \\
                     & Fashion-MNIST & $87.48\%$  & $87.22\%$  & $83.01\%(\epsilon=11.00)$  & $86.92\%$  & $83.19\%(\epsilon=11.00)$  \\ \hline
\multirow{4}{*}{100} & MNIST         & $96.24 \%$ & $96.11\%$  & $88.60 \%(\epsilon=9.33)$  & $96.11 \%$ & $89.12 \%(\epsilon=9.33)$  \\
                     & CIFAR-10      & $71.24 \%$ & $71.02 \%$ & $62.21 \%(\epsilon=11.58)$ & $71.20 \%$ & $62.56 \%(\epsilon=11.31)$ \\
                     & SVHN          & $92.12\%$  & $92.03\%$  & $82.47\%(\epsilon=11.00)$  & $91.23\%$  & $82.97\%(\epsilon=11.00)$  \\
                     & Fashion-MNIST & $87.13\%$  & $87.02\%$  & $82.21\%(\epsilon=10.610)$ & $87.12\%$  & $82.91\%(\epsilon=10.61)$  \\ \hline
\end{tabular}
}
\caption{Accuracy comparison of different aggregation method in Decentralized learning }
\label{tab:accuracy_compare}
\end{table*}

\subsubsection{Security Analysis}

In this subsection, we detail the security measures implemented by the \ppdlpairs algorithm:

\begin{theorem}
\ppdlpairs protocol, under the Decisional Diffie-Hellman (DDH) assumption, establishes a secure protocol for exchanging keys and masking model updates. This protocol ensures the confidentiality and integrity of decentralized learning processes, bolstered by the indistinguishability of shared secrets and enhanced with Shamir's Secret Sharing (SSS) for dropout resilience.
\end{theorem}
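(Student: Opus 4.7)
The plan is to establish security through a sequence of hybrid games, starting from the real-world execution of \ppdlpairs and progressively transforming it into a simulation in which the adversary's view is provably independent of the honest clients' individual model updates, revealing only their aggregated sum. Fix a semi-honest adversary $\mathcal{A}$ corrupting a set $\mathcal{C} \subset C$ with $|\mathcal{C}| < t$ (honest majority), and denote the honest set by $\mathcal{H} = C \setminus \mathcal{C}$. The simulator will be given only $\sum_{i \in \mathcal{H}} \mathbf{w}_i$ and must produce a transcript indistinguishable from the real protocol execution.

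The first hybrid step reduces the pseudorandomness of the pairwise Diffie--Hellman shared secrets to the DDH assumption. For every pair of honest clients $(i,j) \in \mathcal{H} \times \mathcal{H}$, I would replace the shared secret $s_{i,j} = H(g^{a_i a_j})$ with a uniformly random element drawn from the PRNG key space. A standard reduction embeds a DDH challenge $(g^x, g^y, g^z)$ into the public keys of one such honest pair, so any non-negligible distinguishing advantage contradicts DDH; a union bound over the $O(|\mathcal{H}|^2)$ honest pairs keeps the total advantage negligible. Once these seeds are uniform, the PRNG security assumption makes each pairwise mask vector $\mathrm{PRNG}(s_{i,j})$ computationally indistinguishable from a uniform vector of the same length in $\mathbb{F}_q^{|\mathbf{w}|}$.

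The second step exploits the algebraic cancellation of pairwise masks. Summing the broadcast vectors $\mathbf{y}_i$ over honest clients yields $\sum_{i \in \mathcal{H}} \mathbf{w}_i$ plus the honest personal masks plus a sum in which every honest--honest mask $\mathrm{PRNG}(s_{i,j})$ appears exactly once with each sign and cancels. The remaining honest--corrupt mask terms are values the adversary can recompute from its own secret keys $\{a_j\}_{j \in \mathcal{C}}$. Conditioned on any fixing of these adversary-derived terms, the adversary's view of each individual honest $\mathbf{y}_i$ is the sum of $\mathbf{w}_i$ and a one-time pad formed from the now-uniform honest--honest masks. Because the honest-majority assumption guarantees at least one other honest partner, each $\mathbf{w}_i$ is perfectly masked, and a simulator knowing only $\sum_{i \in \mathcal{H}} \mathbf{w}_i$ can produce an indistinguishable transcript by sampling fresh uniform vectors for the honest broadcasts subject to the correct aggregate.

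The final and most delicate step is handling dropout recovery, which I expect to be the main obstacle. When a client $C_k$ drops out after broadcasting $\mathbf{y}_k$, the protocol reconstructs $a_k$ from $t$ SSS shares in order to subtract the residual pairwise masks $\mathrm{PRNG}(s_{k,j})$ from the running aggregate. I must argue that revealing $a_k$ exposes only the pairwise seeds involving $k$ (which the adversary can already recompute from $a_k$ and the public keys $g^{a_j}$), without compromising the DDH-based pseudorandomness of any other pair $(i,j)$ with $i, j \neq k$, since each $a_k$ is sampled independently; the DDH reduction in the first hybrid is then restricted to honest pairs whose keys remain hidden. The argument must also invoke the privacy of SSS to ensure that for every \emph{surviving} honest client the number of shares held by corrupted parties (plus shares voluntarily opened for dropouts the adversary controls) stays strictly below $t$, which follows from $|\mathcal{C}| < t$. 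A careful case split between clients that drop before versus after broadcasting, combined with an inductive application of the earlier hybrids at each recovery round, closes the proof.
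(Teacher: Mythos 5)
Your proposal is correct in outline and substantially more rigorous than the paper's own argument, which is an informal narrative rather than a proof: the paper states the DDH assumption, asserts that the hash of $g^{a_i a_j}$ yields a secure PRNG seed, asserts that the resulting masks hide individual contributions in the aggregate, and asserts that SSS handles dropout, without ever constructing a simulator, performing a hybrid argument, or analyzing what is revealed when a dropped client's key $a_k$ is reconstructed. You follow the same conceptual path (DDH $\Rightarrow$ pseudorandom seeds $\Rightarrow$ pseudorandom pairwise masks $\Rightarrow$ cancellation leaves only the honest sum $\Rightarrow$ SSS threshold for dropout), but you organize it as a genuine simulation-based proof in the style of Bonawitz et al.: the hybrid replacing honest--honest seeds via a DDH embedding with a union bound over pairs, the observation that honest--corrupt masks are adversary-recomputable and can be conditioned away, the one-time-pad argument showing the honest broadcasts are uniform subject to summing to $\sum_{i \in \mathcal{H}} \mathbf{w}_i$, and the careful restriction of the DDH hybrid to pairs whose keys remain hidden after dropout reconstruction. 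This buys an actual reduction and a precise statement of \emph{what} leaks (only the honest aggregate), which the paper never delivers.

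One caveat you should tighten: the algorithm also adds a personal mask $b_i$ to each $\mathbf{y}_i$, and neither the paper nor your second hybrid explains how these personal masks are removed from the aggregate or simulated. In the Bonawitz et al. design the personal mask is itself secret-shared and exists precisely to protect a client whose $\mathbf{y}_k$ was delivered but who is then treated as dropped (so that reconstructing $a_k$ and stripping the pairwise masks does not expose $\mathbf{w}_k$). Your dropout case split between clients dropping before versus after broadcasting is exactly where this matters: if $a_k$ is reconstructed for a client whose masked vector the adversary already holds, the pairwise masks alone no longer protect $\mathbf{w}_k$, and your argument needs the personal mask (and the rule that one never reconstructs both $a_k$ and $b_k$ for the same client) to close that case. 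As written, your proof inherits the paper's silence on this point.
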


\begin{proof}
Given a cyclic group \(\mathbb{G}\) of order \(q\) with generator \(g\), the DDH assumption (as described in~\ref{DH}) states that for any polynomial-time adversary, distinguishing between tuples \((g^a, g^b, g^{ab})\) and \((g^a, g^b, g^c)\), where \(c\) is chosen uniformly at random from \(\mathbb{Z}_q\), is computationally hard. This assumption is crucial for securing the key exchange phase, ensuring shared secrets \(g^{ab}\) are indistinguishable from random and protected against eavesdropping.

Following key exchange, the protocol uses a PRNG seeded with Diffie-Hellman derived shared secrets \(s_{ij} = g^{a_i a_j}\) to generate cryptographic masks for obfuscating model parameters \(w\). The security of this masking relies on the assumption that the hash function \(H\), applied to \(s_{ij}\) for randomness extraction, produces a uniformly random string that serves as a secure PRNG seed. These masks ensure confidentiality through computational secrecy, predicated on the DDH problem's hardness.

In the aggregation phase, clients compile masked model updates without exposing individual contributions, preserving privacy and integrity. This is achieved by each client \(C_n\) aggregating the received masked updates \(\sum y_i^{(w)}\), where \(y_i^{(w)} = w_i + \sum m_{ij}\), leveraging the computational indistinguishability of the masks from random due to the secure PRNG process.

Moreover, integrating Shamir's Secret Sharing (SSS) with a defined threshold enhances the protocol's security against dropout. It leverages the uniform randomness and indistinguishability of shares to preserve confidentiality and integrity, even amid client dropouts. This strategic addition ensures the aggregation process remains secure despite network variability, reinforcing the protocol's robust security framework and its efficiency in decentralized settings.

\end{proof}

\section{Evaluation}
In this section, we conduct a set of  experiments to demonstrate the robustness and efficiency of our proposed protocols.

\subsection{Experimental Setup}

We implemented the protocols in Python, utilizing the Pytorch framework for model training~\cite{paszke2019pytorch}\footnote{The implementation will be made publicly available upon acceptance.}. All experiments were conducted on a compute cluster with an Intel Xeon Gold 6150, Nvidia V100 GPU, and 128GB RAM in the local network. 

\subsubsection{Datasets} For our assessment, we utilized four widely recognized datasets in deep learning: MNIST~\cite{lecun1998mnist}, Fashion-MNIST~\cite{xiao2017fashion}, SVHN~\cite{netzer2011reading}, and CIFAR-10~\cite{krizhevsky2009learning}, all partitioned into subsets that are independent and identically distributed (IID). We divided the training data into smaller subsets (generally in equal sizes) and distributed these among the clients at random. The MNIST dataset is comprised of 60,000 training images and 10,000 testing images of handwritten digits in grayscale, with each image measuring \(28 \times 28\) pixels. Fashion-MNIST offers a dataset for a 10-class fashion item classification challenge, including 60,000 training images and 10,000 testing images of fashion items. Derived from Google Street View images of house numbers, the SVHN dataset contains 99,289 color images in ten categories, with a training set of 73,257 images and a testing set of 26,032 images, each standardized to \(28 \times 28\) pixels. CIFAR-10 presents a varied set of 50,000 training and 10,000 testing color images distributed over ten unique categories, with each image depicting one category.

\subsubsection{Model}

We implemented tailored model architectures to address the unique characteristics of our datasets: a CNN for CIFAR-10 and SVHN and a two-layer MLP for MNIST and Fashion-MNIST. The CNN architecture features two convolutional layers with a kernel size of \(3 \times 1\) per layer, followed by three fully connected layers, with 384 neurons in each hidden layer and 10 neurons in the output layer. This structure is designed to effectively process and learn from the complex image data presented by CIFAR-10. The model is optimized with a learning rate of 0.002 and a batch size of 128, incorporating Group Normalization (GN) alongside max pooling and dropout rates between 0.2 to 0.5, culminating in a fully connected output layer. The MLP model is comprised of two fully connected layers with 100 and 10 neurons, respectively. This model is optimized with a learning rate of 0.01 and a batch size of 128.

% \vspace{-3mm}

\subsection{Classification Accuracy}

A comprehensive accuracy comparison of different aggregation methods we presented in Section~\ref{sec:approaches} is provided in Table~\ref{tab:accuracy_compare}. The experiments involve four datasets (MNIST, CIFAR-10, SVHN, and Fashion-MNIST) and the number of users set at 50 and 100. The aggregation methods compared include baseline DL without privacy-preserving aggregation, \ppdlbase, \ppdlerror, \ppdlpairs, and DLDP (Differentially Private Decentralized Learning~\cite{cheng2019towards}).

For both groups of 50 and 100 users, DL consistently achieves the highest accuracy across all datasets, serving as a benchmark for the effectiveness of decentralized learning without privacy-preserving constraints. \ppdlbase and \ppdlpairs closely match the DL's accuracy with minimal losses, indicating their capability to maintain high accuracy while offering privacy-preserving benefits. Specifically, \ppdlbase demonstrates accuracies of $96.52\%$ and $96.11\%$ for MNIST, closely following DL's $96.56\%$ and $96.24\%$ for groups of 50 and 100 users, respectively. This pattern of minimal accuracy reduction persists across other datasets, underscoring the efficiency of these methods in balancing privacy with performance.

\ppdlerror and DLDP, however, exhibit a noticeable decrease in accuracy due to their rigorous privacy constraints, as evidenced by their accuracy figures and \(\epsilon\) values. In the CIFAR-10 dataset, \ppdlerror's accuracy dips to $62.12\%$ for 50 users and $62.21\%$ for 100 users, a significant reduction from DL's $74.46\%$ and $71.24\%$. This decline in performance is echoed across the board, with \ppdlerror and DLDP consistently posting lower accuracies in exchange for stronger privacy guarantees denoted by their respective \(\epsilon\) values. For instance, in the SVHN dataset, DLDP's accuracy is $83.07\%$ for 50 users and $82.97\%$ for 100 users, compared to DL's more robust $92.92\%$ and $92.12\%$.

In conclusion, \ppdlbase and \ppdlpairs strike a good balance between privacy preservation and accuracy, closely matching the benchmarks set by the baseline DL. Meanwhile, \ppdlerror offers strong privacy guarantees by combining differential privacy  and masking techniques for DL, achieving a level of accuracy comparable to that of DLDP.

\begin{figure*}[tb!]
    \centering
    % First figure
    \begin{subfigure}{0.25\linewidth}
        \centering
        \includegraphics[width=\linewidth]{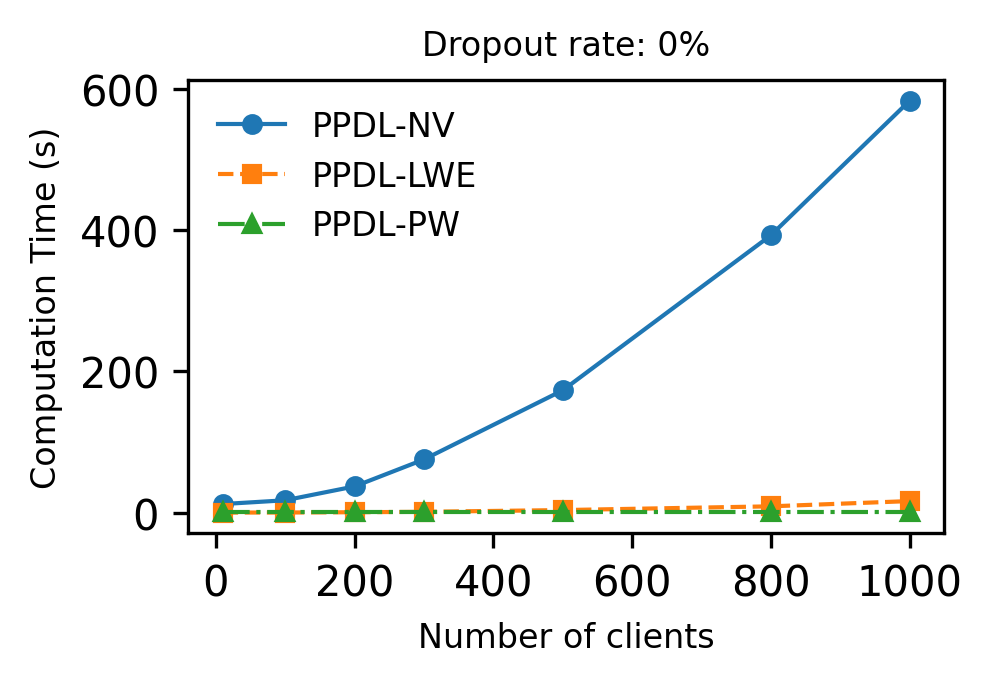}
        % \caption{First caption} % Individual caption for the first figure
    \end{subfigure}\hfill % Adjust spacing as needed
    % Second figure
    \begin{subfigure}{0.25\linewidth}
        \centering
        \includegraphics[width=\linewidth]{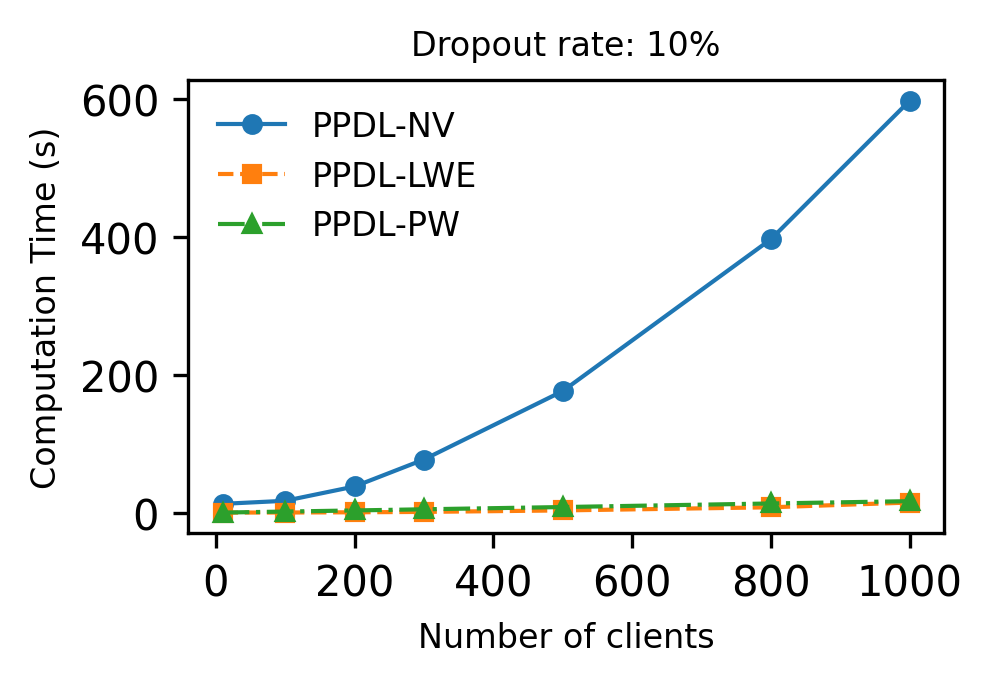}
        % \caption{Second caption} % Individual caption for the second figure
    \end{subfigure}\hfill % Adjust spacing as needed
    % Third figure
    \begin{subfigure}{0.25\linewidth}
        \centering
        \includegraphics[width=\linewidth]{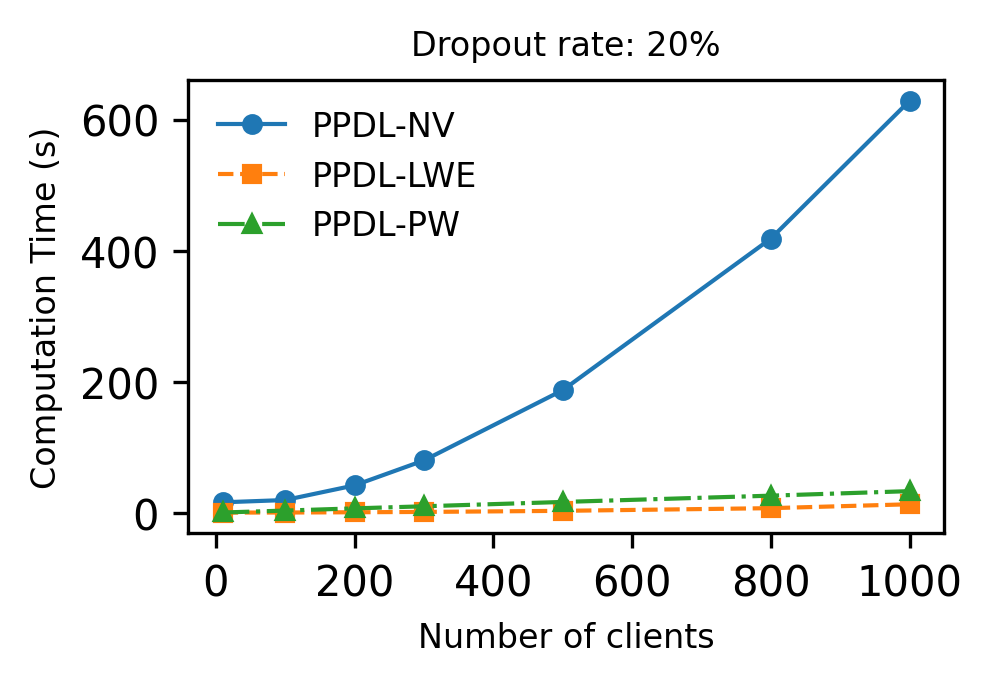}
        % \caption{Third caption} % Individual caption for the third figure
    \end{subfigure}\hfill % Adjust spacing as needed
    % Fourth figure
    \begin{subfigure}{0.25\linewidth}
        \centering
        \includegraphics[width=\linewidth]{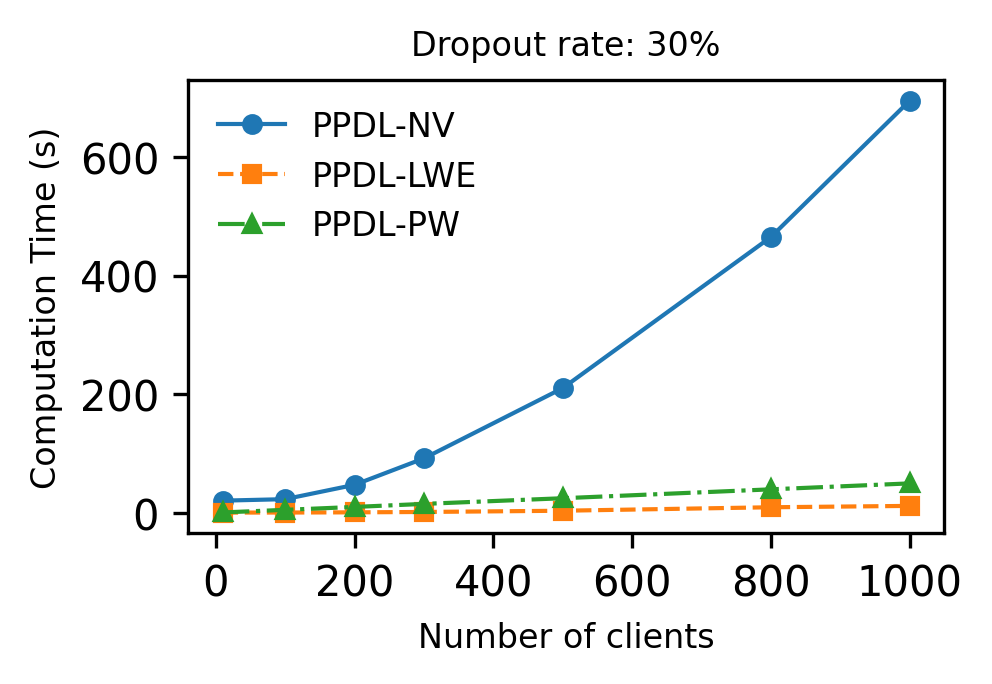}
        % \caption{Fourth caption} % Individual caption for the fourth figure
    \end{subfigure}
    
    \caption{Implications of increasing the number of clients on the efficiency of protocols.} % Main caption for all figures
    \label{fig:fractionbyz}
\end{figure*}

\begin{figure*}[tb!]
    \centering
    % First figure
    \begin{subfigure}{0.25\linewidth}
        \centering
        \includegraphics[width=\linewidth]{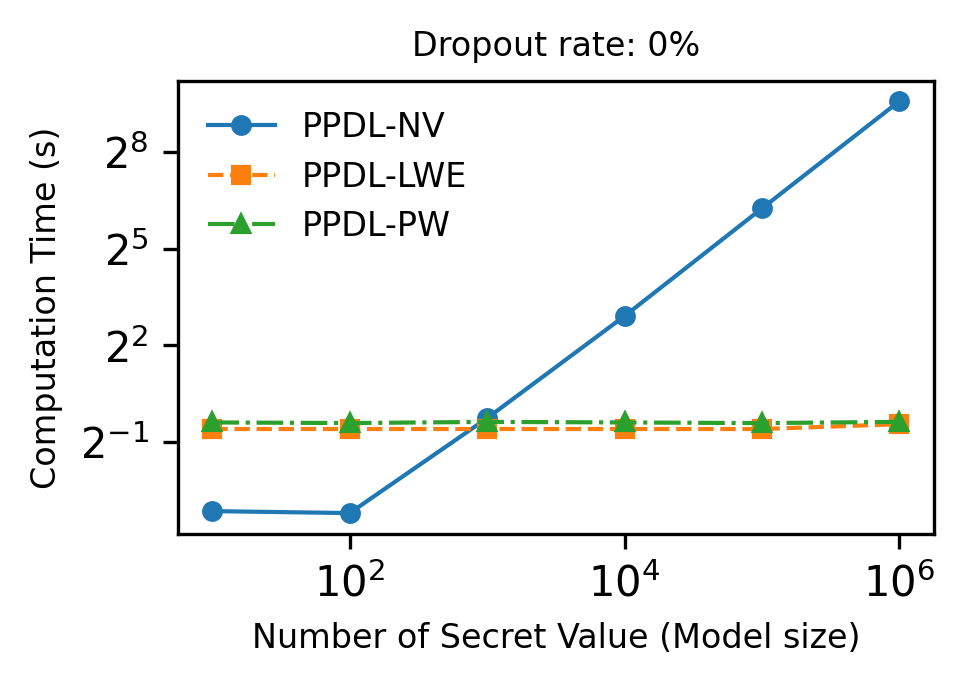}
        % \caption{First caption} % Individual caption for the first figure
    \end{subfigure}\hfill % Adjust spacing as needed
    % Second figure
    \begin{subfigure}{0.25\linewidth}
        \centering
        \includegraphics[width=\linewidth]{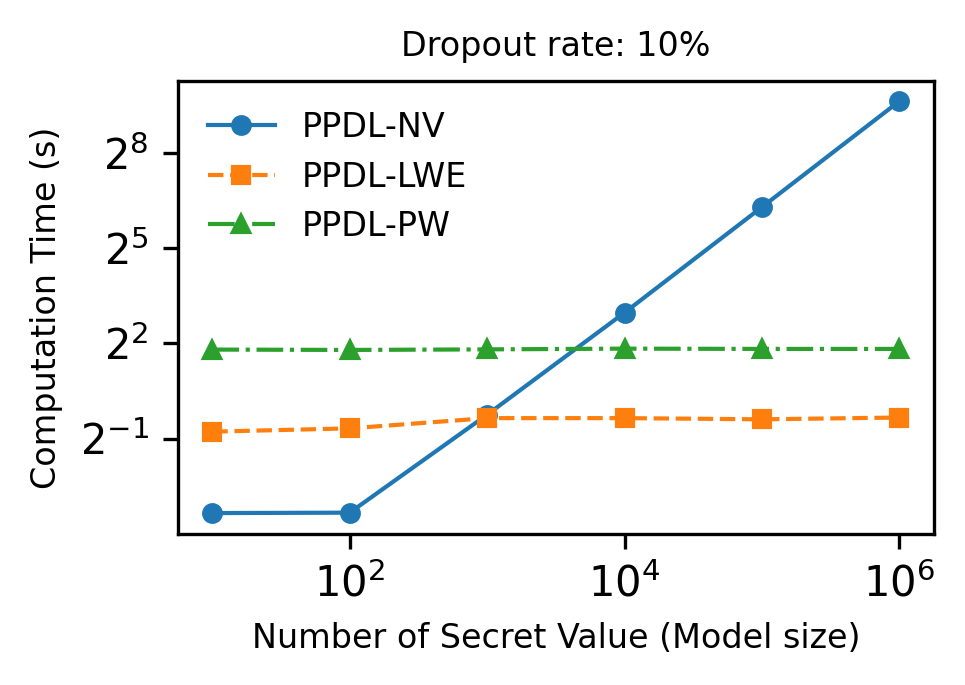}
        % \caption{Second caption} % Individual caption for the second figure
    \end{subfigure}\hfill % Adjust spacing as needed
    % Third figure
    \begin{subfigure}{0.25\linewidth}
        \centering
        \includegraphics[width=\linewidth]{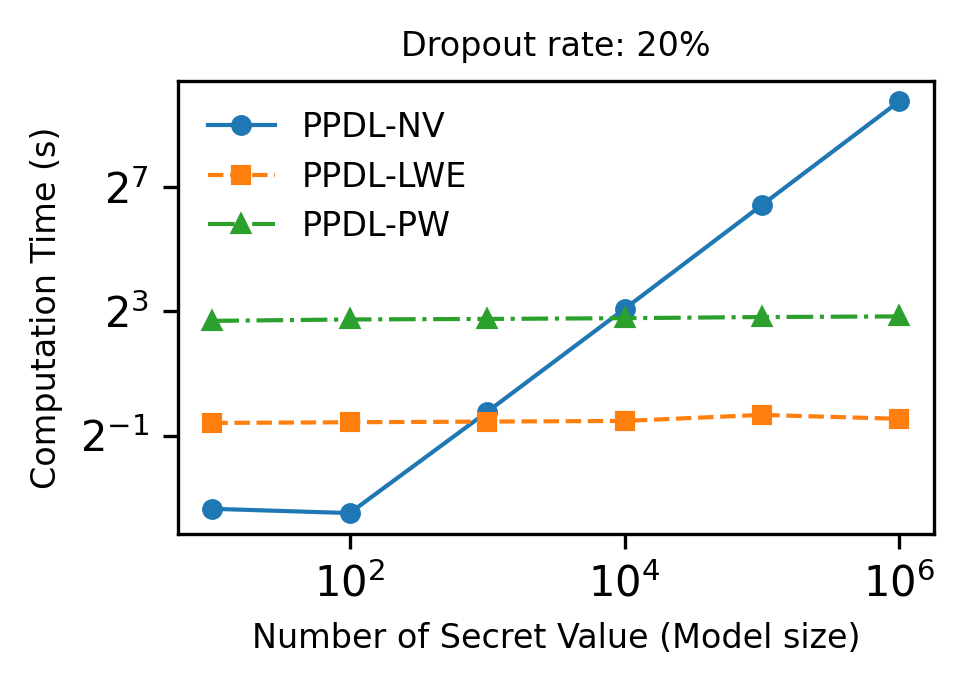}
        % \caption{Third caption} % Individual caption for the third figure
    \end{subfigure}\hfill % Adjust spacing as needed
    % Fourth figure
    \begin{subfigure}{0.25\linewidth}
        \centering
        \includegraphics[width=\linewidth]{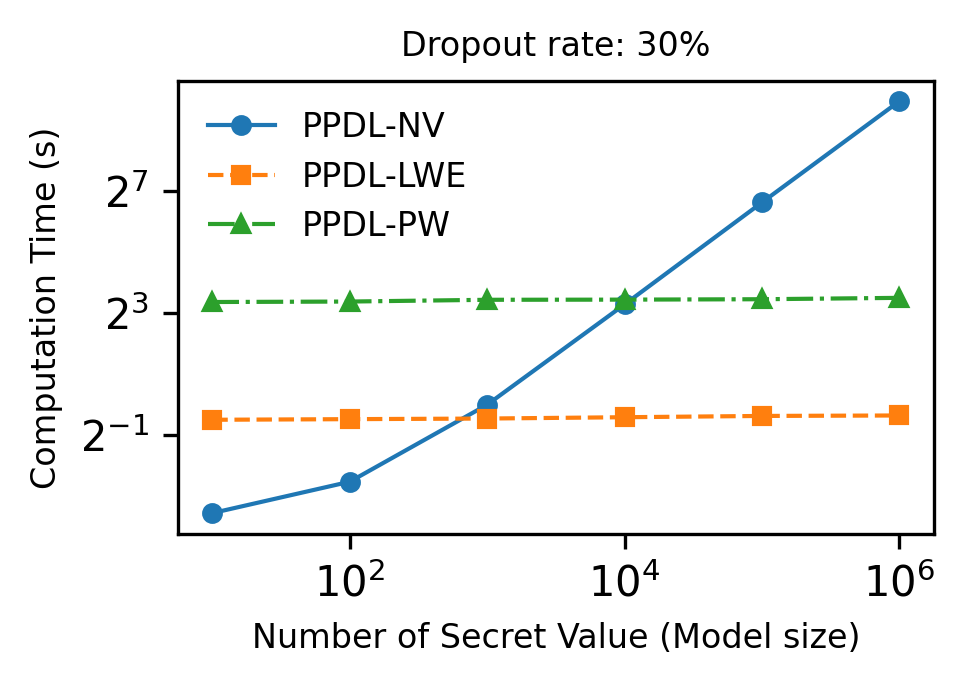}
        % \caption{Fourth caption} % Individual caption for the fourth figure
    \end{subfigure}
    
    \caption{Implications of increasing the number of model parameters on the efficiency of protocols.} % Main caption for all figures
    \label{fig:fractionmodel}
\end{figure*}

\subsection{Impact of Client Number and Dropout on Computational Overhead}

Figure~\ref{fig:fractionbyz} presents a comparative analysis of the three privacy-preserving aggregation methods for DL 
% (\ppdlbase, \ppdlerror, and \ppdlpairs)  
in terms of computation time when the number of clients and dropout ratio change. 
% It examines the effects of both the increasing number of clients and the variation in dropout rates on computation times. 
This examination is based on a single iteration of the learning process, with the model parameter count set at 50,000. Additionally, we note that 20\% of the network clients are considered semi-honest, adding another layer of complexity to the privacy-preserving mechanisms employed. 

As the number of clients increases from 10 to 1000, we observe a notable increase in computation times for all evaluated methods. \ppdlbase, despite leveraging packed Shamir secret sharing, experiences a significant rise in computation time, which varies from 12.22 seconds to 694.79 seconds. This increase underscores both the method's computational demand and its robustness in handling complex calculations. On the other hand, \ppdlerror stands out for its exceptional efficiency, with computation times only ranging from 0.16 to 16.43 seconds as the number of clients grows. This efficiency highlights \ppdlerror's potential for scalability in large-scale applications where computational resources are a critical consideration."

% As the number of clients increases from 10 to 1000, a significant escalation in computation times is observed across the evaluated methods. Despite employing packed Shamir secret sharing, \ppdlbase has a substantial  computation time, ranging from 12.22 seconds to 694.79 seconds, indicating its robustness and computational intensity. Conversely, \ppdlerror demonstrates exceptional efficiency, with computation times expanding only from 0.16 to 16.43 seconds across the client range. This suggests its suitability for large-scale applications where computational resources may be limited.

\ppdlpairs stands out for its unique response to network conditions. At a 0\% dropout rate, it maintains constant computation times of 0.71 seconds irrespective of the number of clients, showcasing unparalleled efficiency in stable conditions. However, its computation times increase linearly with the number of clients and dropout rates as network instability grows, peaking at 49.50 seconds for 1000 clients at a 30\% dropout rate. This behavior points to a direct sensitivity to dropout rates, a characteristic not as pronounced in the other methods, where increases in computation time remain relatively unaffected by variations in dropout.

\vspace{-2mm} 

\subsection{Impact of Model Size on Computational Efficiency}

The scalability of \ppdlbase, \ppdlerror, and \ppdlpairs in response to increasing model sizes and varying dropout rates is illuminated through a closer examination of their computational performance, as depicted in Figure~\ref{fig:fractionmodel}. \ppdlbase’s computational times escalate significantly with model size, increasing from 0.11 seconds for 10 secret values to 760.58 seconds for 1,000,000 secret values at a 0\% dropout rate. This trend intensifies under a 30\% dropout rate, where computational times further surge to 983.88 seconds for 1,000,000 secret values, indicating a profound sensitivity to both increased model size and higher dropout rates. In contrast, \ppdlerror exhibits extraordinary resilience to these factors, with computation times only slightly increasing from 0.66 seconds for 10 secret values to 0.73 seconds for 1,000,000 secret values, even at a 30\% dropout rate. This negligible variation suggests unparalleled stability across model size and network reliability dimensions.

\ppdlpairs, on the other hand, while demonstrating a modest increase in computation time with model size, reveals a more nuanced sensitivity to dropout rates. For example, at a 0\% dropout rate, computation times increase slightly from 0.76 seconds for 10 secret values to 0.77 seconds for 1,000,000 secret values. However, as the dropout rate reaches 30\%, computation times for \ppdlpairs rise more noticeably, from 0.76 seconds for 10 secret values to 11.25 seconds for 1,000,000 secret values. This pattern suggests that while \ppdlpairs is somewhat resilient to model size increases, its efficiency is more significantly impacted by higher dropout rates, especially as the model size grows.

These numerical comparisons reveal the critical interplay between model size and dropout rates in decentralized learning environments. \ppdlbase, though scalable to a degree, faces substantial challenges in maintaining computational efficiency with larger model sizes and higher dropout rates. \ppdlerror stands out for its robustness, showing minimal impact from either increasing model sizes or varying dropout rates, making it a highly reliable choice for large-scale applications. \ppdlpairs occupies a middle ground, offering reasonable scalability and efficiency but with a noticeable decrease in performance under higher dropout rates and larger model sizes.

\begin{table}[t]
\resizebox{\linewidth}{!}{
\begin{tabular}{c|c|c|c|c|c|}
\cline{2-6}
                              & \ppdlbase & \ppdlerror  & \ppdlpairs & $\mathrm{D}^2$-MHE & Paillier-HE \\ \hline
\multicolumn{1}{|c|}{MNIST}   & 20.93  & 0.24 & 0.21    & 26.2~\cite{xu2023secure}               & 59.54~\cite{xu2023secure}       \\ \hline
\multicolumn{1}{|c|}{CIFAR10} & 513.18 & 0.24 & 0.21    & 62.19~\cite{xu2023secure}              & 106.42~\cite{xu2023secure}      \\ \hline
\end{tabular}
}
\caption{Computational overhead of one iteration of various privacy-preserving approaches (Seconds)}
\label{tab:he}
\end{table}

% \vspace{-10mm}
\subsection{Computation overhead comparison}

In addition to evaluating our proposed approaches, we also compare our protocols with two homomorphic encryption-based DL aggregation protocols, namely $\mathrm{D}^2$-MHE~\cite{xu2023secure} and Paillier-HE~\cite{xu2023secure}, to provide a comprehensive analysis of computational overhead across different privacy-preserving mechanisms when applied to the MNIST and CIFAR-10 datasets. Table~\ref{tab:he} illustrates the comparative computational demand for one iteration among various approaches, including \ppdlbase, \ppdlerror, \ppdlpairs, along with the aforementioned homomorphic encryption-based schemes.

The $\mathrm{D}^2$-MHE scheme capitalizes on recent advancements in homomorphic encryption, explicitly employing the Brakerski-Fan-Vercauteren (BFV) framework, aiming to minimize computational costs while securely updating gradients in a decentralized training context. On the other hand, Paillier-HE represents a classical approach to homomorphic encryption, adapted here for decentralized settings. The performance data for both $\mathrm{D}^2$-MHE and Paillier-HE were adopted from the study \cite{xu2023secure}, eschewing direct implementation for these comparisons. We maintained identical model architectures and configurations across all evaluated protocols to ensure a fair and accurate comparison.

For the MNIST dataset, \ppdlbase's method exhibits a relatively high computational overhead at 20.93 seconds, which dramatically increases to 513.18 seconds for the more complex CIFAR-10 dataset, indicating its less favorable scaling with data complexity. In stark contrast, \ppdlerror and \ppdlpairs methods demonstrate exceptional efficiency, with negligible overheads of 0.24 and 0.21 seconds, respectively, for both datasets. This stark efficiency highlights their suitability for decentralized scenarios where minimizing computational burden is crucial. However, the $\mathrm{D}^2$-MHE and Paillier-HE approaches present significantly higher computational times. For MNIST, $\mathrm{D}^2$-MHE stands at 26.2 seconds, escalating to 62.19 seconds for CIFAR-10, while Paillier-HE jumps from 59.54 to 106.42 seconds, respectively. These results reflect the inherent computational overhead associated with HE techniques, particularly Paillier-HE's intensive modular exponential operations and the collaborative decryption process among multiple users, which, despite its privacy advantages, results in considerable computational overhead.

\subsection{Computational and communication complexity}

In this section, we derive the computational and communication complexities associated with each of the proposed \ppdl protocols. 
% Table~\ref{tab:complexity} provides a summary of relevant complexities and protocols.

% \begin{table}[t]
% \resizebox{\linewidth}{!}{
% \begin{tabular}{|c|c|c|c|}
% \hline
% Complexity & \ppdlbase & \ppdlerror  & \ppdlpairs  \\ \hline
% Computational    & \(O(nT^2/K)\)  & \(O(nmT/K + nT^2/K)\) & \(O(nmT/K)\), \(O(nT^2/K)\)\\ 
% Communication   & \(O(n^2)\)  & \(O(k^2m + k^2n)\) & \(O(n^2 + mn)\) \\ \hline
% \end{tabular}
% }
% \caption{Summary of computational and communication complexity of proposed \ppdl. The two complexities of \ppdlpairs are for share generation and reconstruction respectively.}
% \label{tab:complexity}
% \end{table}

\subsubsection{\ppdlbase}

\textbf{Computational complexity}: \ppdlbase protocol, employing Packed Shamir's Secret Sharing, has a computational complexity that unfolds across a few key steps. Initially, clients generate packed shares of their local model update \(w_i\) with a complexity of \(O(nT/K)\), leveraging the efficiency of packing multiple elements per share for \(n\) clients, \(T\) as the threshold, and \(K\) as the packing factor. The aggregation of shares from \(n-1\) peers follows a linear complexity of \(O(n)\), thanks to the simple addition of packed shares. The final, most complex phase is the reconstruction of the aggregated value from shares, at \(O(nT^2/K)\), due to polynomial interpolation for unpacking. The protocol's overall complexity is thus primarily determined by this reconstruction phase, leading to an \(O(nT^2/K)\) complexity.

% \textbf{Computational complexity}: In this protocol that utilizes Packed Shamir's Secret Sharing, the computational complexity encompasses several sequential steps: Initially, each client generates packed shares of their local model update \(w_i\) with complexity \(O(nT/K)\), reflecting the efficiency of packing multiple elements into each share for \(n\) clients, with \(T\) as the threshold and \(K\) as the number of elements packed. Following this, clients aggregate the received shares from \(n-1\) peers, a step characterized by its linear complexity \(O(n)\) due to the straightforward addition of these packed shares. The final step involves reconstructing the aggregated value from the collected shares, which bears a complexity of \(O(nT^2/K)\), accounting for the polynomial interpolation required to unpack and decode the aggregated model from the packed shares. Considering all steps, the overall computational complexity of the protocol is dominated by the reconstruction phase, leading to an overall complexity of \(O(nT^2/K)\).

\textbf{Communication complexity:} This protocol's communication complexity arises from two key operations: share distribution and aggregated share broadcasting. Each client sends their packed shares to \(n-1\) others and receives an equivalent number from every other client, resulting in a communication complexity of \(O(n^2)\). This stems from the need for each of the \(n\) clients to interact with all others. The aggregation step, involving each client broadcasting their aggregated share to all, also reflects an \(O(n^2)\) complexity due to the peer-to-peer exchange among \(n\) participants. Thus, the protocol's overall communication complexity remains \(O(n^2)\), underscored by the extensive interactions necessary for decentralized aggregation.

\subsubsection{\ppdlerror}

\textbf{Computational complexity:} The computational complexity for each client is primarily shaped by generating random vectors (\(O(m)\)), matrix-vector multiplication (\(O(mn)\)), and leveraging packed Shamir's Secret Sharing (\(O(nmT/K)\) for sharing and \(O(nT^2/K)\) for reconstruction). Given \(m\) exceeds the length of the secret vector \(s\), the initial vector generation simplifies to \(O(m)\). The subsequent operations, particularly the reconstruction of packed secret shares, culminate in a dominant \(O(nT^2/K)\) complexity. Thus, the overall client-side complexity is effectively summarized by \(O(nmT/K + nT^2/K)\).

\textbf{Communication complexity:} In \ppdlerror protocol, the communication complexity for each participant is derived from a series of direct, peer-to-peer exchanges. Specifically, every client sends their encrypted gradient vector of size \(m\) to \(k-1\) peers and receives an equivalent number of encrypted vectors from others, leading to a communication load of \(O(km(k-1))\). Additionally, the aggregate mask, also of size \(m\), is computed using multi-party computation and subsequently distributed among \(k-1\) peers, adding \(O(mk(k-1))\) to the communication cost. Secure aggregation, facilitated through SSS, requires each client to distribute secret shares of their vector \(s\) to all \(k-1\) other clients, contributing \(O(k^2n)\) to the overall complexity. Therefore, when accounting for both sending and receiving operations in this decentralized scenario, the communication complexity for each client effectively scales as \(O(k^2m + k^2n)\).

\subsubsection{\ppdlpairs}

\textbf{Computational complexity:} It is defined by two primary operations: \(O(n^2)\) for generating \(t\)-out-of-\(n\) Shamir secret shares for the privacy key across \(n\) participants, reflecting the quadratic nature of polynomial computations required for secure share distribution, and \(O(mn)\) for the linear complexity associated with generating pseudorandom values for each participant, where \(m\) denotes the size of the input vector. Furthermore, secret reconstruction, particularly exigent during participant dropouts, demands a computational complexity of \(O(n^3)\) due to the intricacies of polynomial interpolation. To enhance computational efficiency, our work incorporates packed Shamir's Secret Sharing, effectively reducing the complexity to \(O(nmT/K)\) for share generation and \(O(nT^2/K)\) for reconstruction, with \(T\) representing the threshold for reconstruction and \(K\) the number of secrets integrated into each polynomial. 

\textbf{Communication complexity:} Initially, each participant exchanges a single public key with every other participant, resulting in \(n(n-1)\) total public key exchanges across the network. Concurrently, secret shares of the privacy key are distributed similarly, with each participant sending and receiving a secret share to and from each of the \(n-1\) other participants, doubling the count of exchanges and reinforcing the quadratic component of the communication complexity. Additionally, each participant broadcasts a masked data vector of size \(m\) to all \(n-1\) other participants, further contributing to the communication load with \(m(n-1)\) units of data sent by each participant. The overall communication complexity per participant is \(O(n^2 + mn)\), reflecting the quadratic growth from pairwise key and secret share exchanges (\(n^2\) for large \(n\)) and the linear growth from broadcasting masked data vectors (\(mn\)).

\section{Conclusion}
This study delves into the challenges and necessities of privacy preservation and efficiency in decentralized learning environments. We introduce three novel protocols designed for secure aggregation that directly address DL's unique challenges, such as heightened risks of information leakage and data integrity issues in the face of variable client participation.
% and vulnerabilities to privacy breaches, ensuring robust privacy protection and data integrity despite variable client participation. 
Additionally, exploring mechanisms for dropout resilience strengthens the learning process against disruptions caused by client dropouts, enhancing the system's overall reliability and effectiveness. Through rigorous experimental evaluation across various datasets and scenarios, we have empirically validated the effectiveness of our proposed methods. Our experiments, testing our protocols over diverse datasets to ensure robustness and generalizability, demonstrate their computational efficiency and resilience to network fluctuations while preserving privacy. The results of the experiments reaffirm the feasibility of implementing privacy-preserving decentralized learning at scale. It provides valuable insights into optimizing aggregation protocols to balance the trade-offs between privacy, accuracy, and efficiency. This work lays the groundwork for future progress in developing more efficient decentralized learning architectures.

% \begin{acks}
% To Robert, for the bagels and explaining CMYK and color spaces.
% \end{acks}

%%
%% The next two lines define the bibliography style to be used, and
%% the bibliography file.
\bibliographystyle{ACM-Reference-Format}
\bibliography{ref}

\end{document}